\newtheorem{defin}{Definition}
\newtheorem{Prop}{Proposition}
\newtheorem{remark}{Remark}
\newtheorem{Lemma}{Lemma}
\newcolumntype{?}{!{\vrule width 1pt}}
\title{Chemically inspired Erd\H{o}s-R\'enyi oriented hypergraphs}
\author[$^{1,2}$]{Angel Garcia-Chung}
\author[$^{1,2}$]{Marisol Berm\'udez-Monta\~na}
\author[$^{2,3,4,5,6,7}$]{Peter F. Stadler}
\author[$^{2,6,8}$]{J\"urgen Jost}
\author[$^{2,3,9},^*$]{Guillermo Restrepo}
\affil[$^{1}$]{Tecnol\'ogico de Monterrey, Escuela de Ingenier\'ia y Ciencias, Estado de M\'exico 52926, M\'exico}
\affil[$^{2}$]{Max Planck Institute for Mathematics in the Sciences, Leipzig, Germany}
\affil[$^{3}$]{Interdisciplinary Center for Bioinformatics Universit\"at Leipzig, Leipzig, Germany}
\affil[$^{4}$]{Bioinformatics Group, Department of Computer Science Universit\"at Leipzig, Leipzig, Germany}
\affil[$^{5}$]{Institute for Theoretical Chemistry, University of Vienna, Vienna, Austria}
\affil[$^{6}$]{The Santa Fe Institute, Santa Fe, New Mexico, USA}
\affil[$^{7}$]{Facultad de Ciencias, Universidad Nacional de Colombia, Bogot\'a, Colombia}
\affil[$^{8}$]{Center for Scalable Data Analytics and Artificial Intelligence,  Universit\"at Leipzig, Leipzig, Germany}
\affil[$^{9}$]{School of Applied Sciences and Engineering, EAFIT University, Medell\'in, Colombia}
\affil[$^*$]{Corresponding author: Guillermo Restrepo, restrepo@mis.mpg.de}
\begin{document}
\maketitle
\begin{abstract}
  \small{ High-order structures have been recognised as suitable models for systems going beyond the binary relationships for which graph models are appropriate.  Despite their importance and surge in research on these structures, their random cases have been only recently become subjects of interest.  One of these high-order structures is the oriented hypergraph, which relates couples of subsets of an arbitrary number of vertices.  Here we develop the Erd\H{o}s-R\'enyi model for oriented hypergraphs, which corresponds to the random realisation of oriented hyperedges of the complete oriented hypergraph.  A particular feature of random oriented hypergraphs is that the ratio between their expected number of oriented hyperedges and their expected degree or size is 3/2 for large number of vertices.  We highlight the suitability of oriented hypergraphs for modelling large collections of chemical reactions and the importance of random oriented hypergraphs to analyse the unfolding of chemistry. }\\
Keywords: graphs, hypergraphs, chemical space, random model, Erd\H{o}s-R\'enyi.

\end{abstract}
\section{Introduction}
Graphs are often selected as the mathematical structure to analyse binary relationships between objects of a system \cite{Schmidt2011,Diestel2018}.  As in any mathematical setting, it is important to determine the bounds of the structures modelling the systems.  This allows for determining how far or close a system is from its theoretical extremes, which leads to study the lower and upper bounds of systems as well as its random cases.  In graph theory, this amounts to determine the edge-less, complete and random graphs.  The edge-less graph corresponds to a system devoid of relationships, while the complete graph to a system depicting the maximum number of relationships.  In turn, the random graph corresponds to a system whose relationships are randomly assigned.

For the sake of clarity and for setting up the notation, a \emph{graph} $G=(V,E)$ corresponds to a set of \emph{vertices} $V$ and a set of \emph{edges} $E$ ($E\subseteq \{\{x,y\}: x,y \in V\}$). 
Note that we consider simple graphs, that is, graphs without multiple edges and without loops.  Thus, $E$ is a set, but not a multiset, and if $\{x,y\}\in E$, it follows that $x \neq y$. While the edge-less and complete graphs are straightforwardly defined, the former as a graph $G$ with an empty set $E$ and the second as one with $n(n-1)/2$ edges, the random graph allows for several approximations.

The earliest and most general random graph model was reported by Erd\H{o}s and Rényi in 1959 \cite{ErdosRenyi1959} and is constructed by the random assignment of edges on the set $V$, with probability of assignment $p$.  That is, given a set $V$ of $n$ vertices, the random graph corresponds to the realisation, or not, of every possible edge on $V$.  The probability of realisation of the edges is given by $p$.  This can be thought of as the result of an algorithm that takes every possible couple of vertices in $V$ and decides whether to link them or not.  The decision depends on generating a random number, and if that number happens to be less than a predetermined value, denoted as $p$, then the vertices are connected; otherwise, they remain disconnected. 

Despite the widespread use of graphs, it has been acknowledged that certain systems exhibit relationships that surpass the typical binary relations represented by graphs \cite{Boccaletti2023,Chodrow2020}.  These high-order relations refer to $k$-ary relationships, which involve interactions among $k$ or less vertices, where $k>2$.  Examples of these systems include functional and structural brain networks \cite{Yixin2021,Shi2017}, protein interaction networks \cite{Murgas2022}, chemical reaction networks \cite{Restrepo2022,Leal2021,Stadler2015,Stadler2018}, as well as semantic \cite{Menezes2021,Zuguo2017} and co-authorship networks \cite{Xie2021}.  For instance, in co-authorship networks, a high-order relationship, say of order five, corresponds to the five authors of a five-author paper.

Mathematical settings to address high-order relations include simplicial complexes and hypergraphs \cite{Mulas2022}.  The former are selected for cases where nested sets of related vertices are relevant and the latter for cases where arbitrary sets of vertices are the focus.  Hence, hypergraphs are more general than simplicial complexes and have been the subject of recent studies upon their properties and applications \cite{Mulas2019,Eidi2020,Eidi2020a,Mulas2020a,Sun2021,Thakur2009,Ausiello2017}.

One of the applications of hypergraphs is for modelling chemical reactions, which becomes the leading example and motivation of the current paper.  In this chemical setting, hypergraphs require to encode binary relationships among sets of vertices (substances).  Despite the use of hypergraphs in chemistry, the extreme cases of chemical hypergraphs have not been yet studied.  Here we report extreme cases of chemical hypergraphs.  That is, the mathematics of edge-less, complete and random chemical hypergraphs, as well as some of their properties.  Before discussing these structures, we provide some details on their use for modelling chemical reactions.

\section{Graphs and hypergraphs for modelling the chemical space}

Chemical space spans all substances and reactions reported over the history of chemistry \cite{Llanos2019,Restrepo2022} and has been initially modelled with graphs \cite{Fialkowski2005,Grzybowski2006}.  In this setting, reactions in Figure \ref{fig:modles_rxns}a are modelled as graphs as shown in Figure \ref{fig:modles_rxns}b.\footnote{Models of the chemical space strongly emphasise the role of reactions as the ``gluing'' aspect relating substances, which, in turn, endows the set of substances with a structure.  This is what actually turn the set of substances into a space \cite{Restrepo2022}.  In such a setting, however, non-connected substances, often arising from chemical extractions, play an important role in the space, as they represent new non-synthetic chemicals, which may, or not, remain disconnected in the space, or which require a certain amount of time to be connected to the network.  This was the case of the noble gases, for instance, which, for many years, remained as isolated substances of the chemical space.  Moreover, determining the average time required to connect a substance to the network of the chemical space is of central importance for studies on the evolution of chemical knowledge, as well for the chemical industry \cite{JostRestrepo2022}.}  The model encodes the binary relationship between an educt and a product of a reaction.  Although informative, the graph model misses important chemical information.  For instance, whether a substance can be reached from another in a given chemical space.  This is the case of A and E (in Figure \ref{fig:modles_rxns}b), which are connected via two edges: $\{$A, F$\}$ and $\{$F, E$\}$. Nevertheless, Figure \ref{fig:modles_rxns}a shows that from A, E cannot be obtained.  This shortcoming is solved by adding more structure to the graph, namely by adding direction to the edges and by modelling reactions as \emph{directed graphs} (Figure \ref{fig:modles_rxns}c).  This model now shows that E cannot be reached from A, but that G can be reached from A.  Despite the improvements in incorporating the direction of the chemical transformation, the directed graph model still lacks other important chemical aspects, namely that the model does not inform which substances react to produce other chemicals.  This shortcoming is solved by modelling reactions as \emph{hypergraphs}.  Figure \ref{fig:modles_rxns}d shows the hypergraph for the reactions of Figure \ref{fig:modles_rxns}a, which encodes the different sets of substances in the reactions of the chemical space analysed.  Those sets correspond to actual mixtures of substances either before (educts) or after the reaction has taken place (products).  From Figure \ref{fig:modles_rxns}d is clear that A is found as mixed with B, as well as E with D and A; and D with C.  Likewise, the hypergraph shows that F and G are substances whose transformations do not require any other substance of the chemical space.

\begin{figure}[ht!]
\centering
\includegraphics[width=0.8\linewidth]{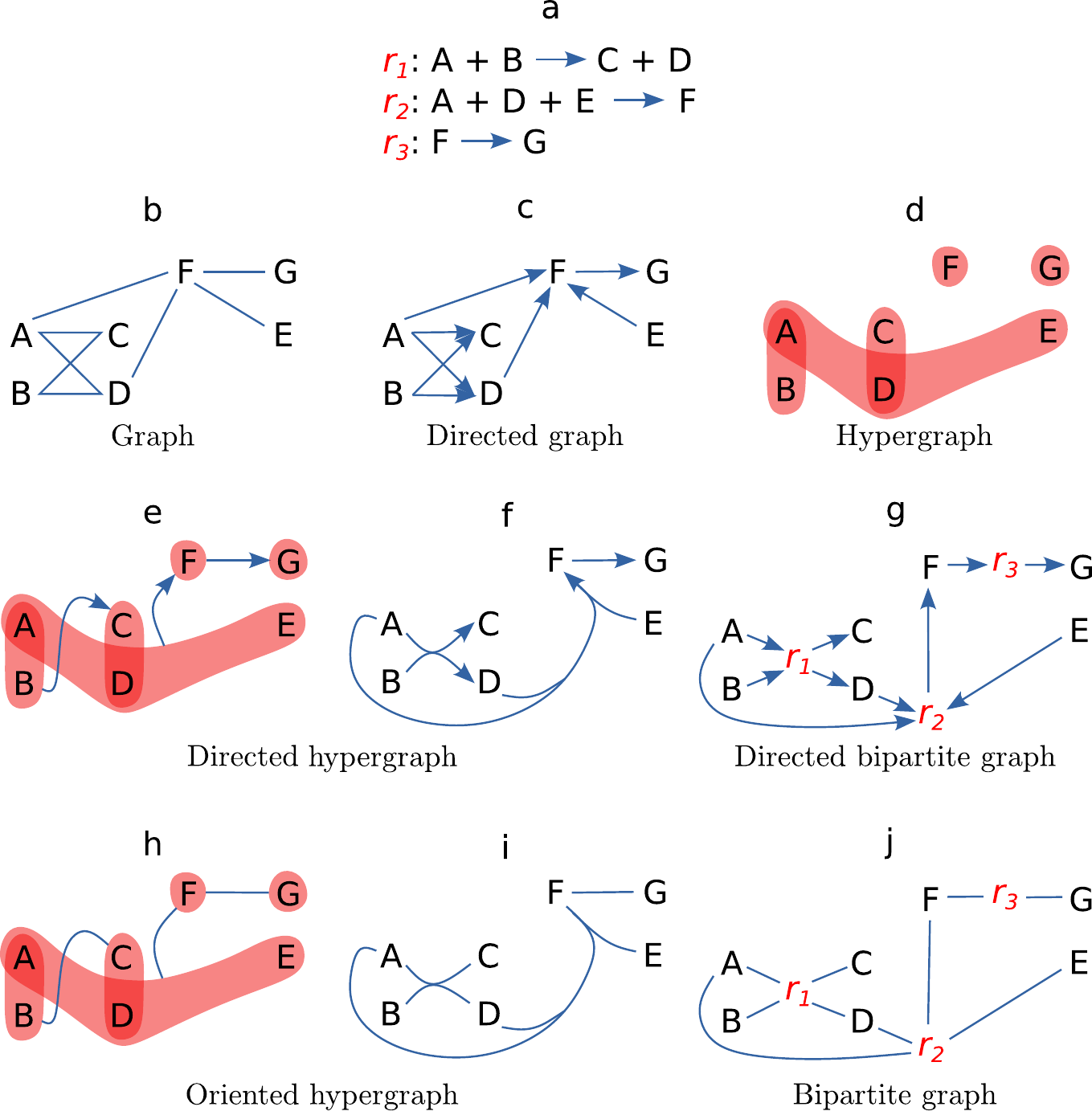}
\caption{Chemical reactions as graphs and hypergraphs.  All structures, from b to j correspond to (hyper)graph models for the chemical reactions in a, which constitute a chemical space of seven substances and three reactions.}
\label{fig:modles_rxns}
\end{figure}

As chemical reactions are actually directed binary relationships among sets of substances, that is among educts and products, a further refinement of the model requires introducing this binary relation, which is encoded by \emph{directed hypergraphs}.  Figure \ref{fig:modles_rxns}e illustrates how the directed hypergraph encodes the transformation of the set of educts \{A, B\} into the products \{C, D\}, as well as the reaction of \{A,D,E\} to produce substance F.  Likewise, it shows the rearrangement of F into G.

Alternative representations of the directed hypergraph of Figure \ref{fig:modles_rxns}e are shown in Figures \ref{fig:modles_rxns}f and g.  Figure \ref{fig:modles_rxns}f, besides emphasising the directed relationship among educts and products, highlights the role of substances (vertices) in the transformation.  Figure \ref{fig:modles_rxns}g maps the directed hypergraph back to the realm of directed graphs.\footnote{In formal terms, it can be considered as a mapping to the category of directed graphs.}  This time not to the directed graphs of Figure \ref{fig:modles_rxns}c but rather to \emph{directed bipartite graphs}, where besides the usual vertices representing substances, a new set of vertices is introduced, namely that representing reactions.  A relaxed version of structures in Figures \ref{fig:modles_rxns}e, f and g are the corresponding undirected structures, shown in Figures \ref{fig:modles_rxns}h, i and j, which are different representations of the associated \emph{oriented hypergraph}. Note how oriented hypergraphs constitute a suitable model for reversible reactions.  Thus, for instance, the oriented hypergraph \{A, B\}-\{C, D\} encodes the reactions A + B $\rightarrow$ C + D, as well as C + D $\rightarrow$ A + B.\footnote{Formally, and in particular following the notation described in Definition  \ref{def:chemical_ultragraph}, \{A, B\}-\{C, D\} can be written down as \{\{A, B\},\{C, D\}\}.}  This encoding is chemically sound, as every reaction is intrinsically reversible.  The actual direction observed in wet-lab experiments arises from the energetic conditions in which molecules are embedded in the reaction process.\footnote{A fairly concise mathematical description on the topic is found in \cite{Mueller:22a}.  It may also happen that the oriented hyperedge \{A, B\}-\{C, D\}, beyond encoding A + B $\rightarrow$ C + D and/or C + D $\rightarrow$ A + B as two reactions occurring at particular conditions, also encodes the same transformation but occurring under different conditions and even through different reaction mechanisms.}

It is upon the oriented hypergraph model for chemical reactions that we study its extreme cases and develop an Erd\H{o}s-Rényi random model.  In the next section the mathematical elements setting up the stage for such study are presented.

\section{Chemically inspired oriented hypergraphs}

As discussed in the previous section, the most informative model for the chemical space is the directed hypergraph.  Nevertheless, for the sake of generality, we report in the current paper results for extreme cases and an Erd\H{o}s-Rényi model for oriented hypergraphs.  That is, in what follows, we regard the chemical space as devoid of direction in its chemical reactions and we focus only on the connectivity of the substances via reactions, while preserving the important aspect of chemical reactions of relating sets of substances.

We introduce some definitions, which assume a fixed number $n$ of vertices gathered on the set $V$.  Upon $V$, subsets of vertices are defined, which are pair-wise related by the oriented hypergraph.  These subsets gather together substances appearing as either educts or products of reactions in the chemical space.  Chemical reactions can be classified as either catalytic or non-catalytic.  The former involve the use of a catalyst, which is a substance added to the educts to speed up the synthesis of reaction products.  Catalysts are not consumed in the reaction, which distinguishes them form the educts.  Chemical notation encodes the catalyst as a label of the reaction.  If, for instance, A + B $\rightarrow$ C + D is catalysed by E, the reaction is written down as A + B $\xrightarrow{\text{E}}$ C + D.  Otherwise, if there is no catalyst involved, A + B $\rightarrow$ C + D represents the reaction.  In this classification, autocatalytic reactions constitute a particular case of catalytic reactions, where at least one of the educts acts as a catalyst.  Hence, A + B $\rightarrow$ B + C is an example of an autocatalytic reaction,\footnote{Note that stoichiometric coefficients are disregarded in this notation.} which can be considered as the sequence of two reactions: first A + B $\rightarrow$ Z, followed by Z $\rightarrow$ B + C, where Z is known as the reaction intermediate.  Hence, oriented hypergraphs turn to be suitable models for all chemical reactions.  Therefore, we model the chemical space as discussed in Definition \ref{def:chemical_ultragraph}.

\begin{defin} \label{def:chemical_ultragraph}
A \emph{chemical space} of $n$ substances gathered in $V$ is modelled as an \emph{oriented hypergraph} $G=(V,E)$, with oriented hyperedges (reactions) gathered in $E\subseteq \{\{X,Y\}:X,Y\in {\cal P}(V)\setminus \{\varnothing \} \text{ and } X\cap Y=\varnothing\}$. $X$ and $Y$, which are sets of substances, are called \emph{hypervertices} of the chemical space and every \emph{oriented hyperedge} $r\in E$ is called a \emph{chemical reaction} of the chemical space. 
\end{defin}

Importantly, in our framework, substances consumed or produced in a chemical reaction are restricted to be in the set $V$. Moreover, hypervertices cannot be empty (Definition \ref{def:chemical_ultragraph}) because there is no chemical reaction leading or starting from an empty set of substances.  Likewise, a reaction cannot start from the complete set of substances, as there would be no room for synthesising a new substance.  Similarly, as no reaction can lead to the set containing all substances, the hypervertex containing all vertices is disregarded from the model.\footnote{Moreover, in this setting we are disregarding the particular reaction conditions at which reactions are carried out. Nonetheless, they can be incorporated as labels of oriented hyperedges.}

Therefore, the maximum number of hypervertices in a chemical space is given by $2^n-2$.  They correspond to the maximum number of combinations of substances chemists can try, which may lead to another set of substances within the given chemical space.\footnote{This upper bound holds significance in, for instance, research on the origin of life.  A mathematical setting for such studies is provided by Dittrich's chemical organisation theory \cite{Dittrich2007}, where finding sequences of reactions involving a given subset of substances of the chemical space is an important aspect of the approach.}  Now we classify those sets of substances by the number of substances they contain.

Let $V$ be the set of $n$ vertices (substances) and  $B_a$ the set gathering together subsets of $V$ of $a$ vertices.  Thus, $B_1$ collects all substances ($B_1=V$), $B_2$ all couples of substances and so forth. The complete set of hypervertices $B$ is given by
\begin{align}
B = \bigcup_{a = 1}^{n-1} B_a = {\cal P}(V)\setminus \{\varnothing, B_n\}.
\end{align}

The number of hypervertices of size $a$ corresponds to the cardinality of $B_a$, which is given by the number of combinations of $a$ vertices that can be obtained out of $n$ vertices. Thus,
\begin{align}
|B_a|={\cal C}^n_a = {n \choose a}. \label{eq:B_i}
\end{align}

As $B$ contains all possible sets of vertices (hypervertices of the oriented hypergraph) involved in chemical reactions for a given set of vertices, a suitable object gathering information on the connectivity of these hypervertices, that is on the chemical reactions, is the generalised \emph{adjacency matrix of the oriented hypergraph} ${\bf M} = [M_{i,j}]_{2^n-2 \times 2^n-2}$, where the indices $i,j = 1,2,\dots, 2^n - 2$ run over all the possible hypervertices for a given $n$. The components of the adjacency matrix are given as
\begin{align}
M_{i,j}=\left\{ \begin{array}{cl}
1 & \mbox{if} \hspace{.1cm} r = \{ b_i , b_j \} \in E, \\ 
0 & \mbox{otherwise.}
\end{array}\right.
\label{adjmatrix}
\end{align}

Thus, any 1-entry of ${\bf M}$ corresponds to a chemical reaction between the hypervertex $b_i$ that gathers $i$ substances and the hypervertex $b_j$ that gathers $j$ substances. Note that ${\bf M}$ is symmetric ($M_{i,j}=M_{j,i}$) because the reactions $b_i \rightarrow b_j$ and $b_j \rightarrow b_i$ are equivalent in the oriented hypergraph. Any 0-entry in ${\bf M}$ indicates either that the reaction is possible but not yet realised in the chemical space or that the reaction is not possible.  In the first case, the two hypervertices $b_i$ and $b_j$ may be connected by a chemical reaction, but the chemical space at disposal has not realised the reaction.  In the second case, there is at least a common substance between $b_i$ and $b_j$ and the reaction is not possible in our scheme. 

Let us consider a toy-chemical space of four reactions over the set of substances $V=\{$A, B, C, D$\}$ (Figure \ref{fig:toy_model}), whose corresponding generalised matrix is shown below.

\begin{table}[!h]
\centering
\caption{Adjacency matrix ${\bf M}$ for a chemical space of four substances \{A, B, C, D\}.  1-entries correspond to realised reactions, while 0-entries to either possible reactions (black) or to impossible reactions (red).  This latter correspond to reactions with at least a common substance in the set of educts and products.  Matrix blocks, separated by bold lines, gather together sets of educts with cardinality $i$ and sets of products with cardinality $j$.} \label{AMatrixN4}
\begin{tabular}{l?c|c|c|c?c|c|c|c|c|c?c|c|c|c?}
& A & B & C & D & AB & AC & AD & BC & BD & CD & ABC & ABD & ACD & BCD \\ 
\bottomrule[1pt]
A & {\color{red}{0}} & {\bf 1} & 0 & 0 & {\color{red}{0}} & {\color{red}{0}} & {\color{red}{0}} & 0 & 0 & 0 & {\color{red}{0}} & {\color{red}{0}} & {\color{red}{0}} & 0\\
\hline
B & {\bf 1} & {\color{red}{0}} & 0 & 0 & {\color{red}{0}} & 0 & 0 & {\color{red}{0}} & {\color{red}{0}} & 0 & {\color{red}{0}} & {\color{red}{0}} & 0 & {\color{red}{0}} \\
\hline
C & 0 & 0 & {\color{red}{0}} & 0 & 0 & {\color{red}{0}} & 0 & {\color{red}{0}} & 0 & {\color{red}{0}} & {\color{red}{0}} & 0 & {\color{red}{0}} & {\color{red}{0}}\\
\hline
D & 0 & 0 & 0 & {\color{red}{0}} & 0 & {\bf 1} & {\color{red}{0}} & {\bf 1} & {\color{red}{0}} & {\color{red}{0}} & 0 & {\color{red}{0}} & {\color{red}{0}} & {\color{red}{0}} \\
\bottomrule[1pt]
AB & {\color{red}{0}} & {\color{red}{0}} & 0 & 0 & {\color{red}{0}} & {\color{red}{0}} & {\color{red}{0}} & {\color{red}{0}} & {\color{red}{0}} & 0 & {\color{red}{0}} & {\color{red}{0}} & {\color{red}{0}} & {\color{red}{0}}\\
\hline
AC & {\color{red}{0}} & 0 & {\color{red}{0}} & {\bf 1} & {\color{red}{0}} & {\color{red}{0}} & {\color{red}{0}} & {\color{red}{0}} & 0 & {\color{red}{0}} & {\color{red}{0}} & {\color{red}{0}} & {\color{red}{0}} & {\color{red}{0}}\\
\hline
AD & {\color{red}{0}} & 0 & 0 & {\color{red}{0}} & {\color{red}{0}} & {\color{red}{0}} & {\color{red}{0}} & {\bf 1} & {\color{red}{0}} & {\color{red}{0}} & {\color{red}{0}} & {\color{red}{0}} & {\color{red}{0}} & {\color{red}{0}}\\
\hline
BC & 0 & {\color{red}{0}} & {\color{red}{0}} & {\bf 1} & {\color{red}{0}} & {\color{red}{0}} & {\bf 1} & {\color{red}{0}} & {\color{red}{0}} & {\color{red}{0}} & {\color{red}{0}} & {\color{red}{0}} & {\color{red}{0}} & {\color{red}{0}}\\
\hline
BD & 0 & {\color{red}{0}} & 0 & {\color{red}{0}} & {\color{red}{0}} & 0 & {\color{red}{0}} & {\color{red}{0}} & {\color{red}{0}} & {\color{red}{0}} & {\color{red}{0}} & {\color{red}{0}} & {\color{red}{0}} & {\color{red}{0}}\\
\hline
CD & 0 & 0 & {\color{red}{0}} & {\color{red}{0}} & 0 & {\color{red}{0}} & {\color{red}{0}} & {\color{red}{0}} & {\color{red}{0}} & {\color{red}{0}} & {\color{red}{0}} & {\color{red}{0}} & {\color{red}{0}} & {\color{red}{0}}\\
\bottomrule[1pt]
ABC & {\color{red}{0}} & {\color{red}{0}} & {\color{red}{0}} & 0 & {\color{red}{0}} & {\color{red}{0}} & {\color{red}{0}} & {\color{red}{0}} & {\color{red}{0}} & {\color{red}{0}} & {\color{red}{0}} & {\color{red}{0}} & {\color{red}{0}} & {\color{red}{0}}\\
\hline
ABD & {\color{red}{0}} & {\color{red}{0}} & 0 & {\color{red}{0}} & {\color{red}{0}} & {\color{red}{0}} & {\color{red}{0}} & {\color{red}{0}} & {\color{red}{0}} & {\color{red}{0}} & {\color{red}{0}} & {\color{red}{0}} & {\color{red}{0}} & {\color{red}{0}}\\
\hline
ACD & {\color{red}{0}} & 0 & {\color{red}{0}} & {\color{red}{0}} & {\color{red}{0}} & {\color{red}{0}} & {\color{red}{0}} & {\color{red}{0}} & {\color{red}{0}} & {\color{red}{0}} & {\color{red}{0}} & {\color{red}{0}} & {\color{red}{0}} & {\color{red}{0}}\\
\hline
BCD & 0 & {\color{red}{0}} & {\color{red}{0}} & {\color{red}{0}} & {\color{red}{0}} & {\color{red}{0}} & {\color{red}{0}} & {\color{red}{0}} & {\color{red}{0}} & {\color{red}{0}} & {\color{red}{0}} & {\color{red}{0}} & {\color{red}{0}} & {\color{red}{0}}\\
\bottomrule[1pt]
\end{tabular}
\vspace*{-4pt}
\end{table}

\begin{figure}[ht!]
\centering
\includegraphics[width=0.6\linewidth]{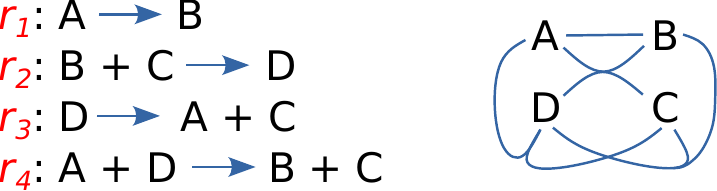}
\caption{Toy chemical space constituted by four substances \{A,B,C,D\} and four reactions $r_i$.  On the left, reactions are presented in chemical notation and on the right the chemical space is depicted as an oriented hypergraph.}
\label{fig:toy_model}
\end{figure}
\newpage
Note, for instance, that the reaction A $\rightarrow$ B or B $\rightarrow$ A is part of the chemical space gathered in ${\bf M}$ as $M_{\text{A,B}}=M_{\text{B,A}}=1$ (Table \ref{AMatrixN4}).  In contrast, A $\rightarrow$ C or C $\rightarrow$ A, although a possible reaction, has a 0-entry in ${\bf M}$ because it is not a part of the chemical space (Figure \ref{fig:toy_model}).  Reaction A $\rightarrow$ AB or AB $\rightarrow$ A, which correspond to A  $\rightarrow$ A + B or A + B $\rightarrow$ A, in chemical notation, is not possible because of the commonality of A in both hypervertices, therefore it is shown as a 0-entry in ${\bf M}$.  We distinguish two kinds of 0-entries in ${\bf M}$.  Those in black font correspond to possible but not realised reactions.  Those in red to impossible reactions because of commonality of at least a substance between educts and products.\footnote{The number of black 0-entries amounts to the unrealised chemical reactions, which together with the 1-entries correspond to the potential chemical space, as called by some philosophers of chemistry \cite{Schummer1996}.}

${\bf M}$ can be arranged by the number of vertices belonging to the hypervertices.  That is, column and rows in ${\bf M}$ can be arranged from $B_1$, $B_2$, $\ldots$ until $B_{n-1}$.  This is the scheme we adopted to present ${\bf M}$ above (Table \ref{AMatrixN4}).  The number of vertices of the hypervertices allows for classifying reactions in terms of their size.  Given a reaction $r=\{b_i,b_j\}$ between hypervertices $b_i$ and $b_j$, the size of $r$ is given by $s(r)=i+j$.  That is, the \emph{size of a reaction} corresponds to the number of substances involved in the reaction.\footnote{The size of a reaction corresponds to the molecularity of the reaction \cite{Atkins2023} if the stoichiometric coefficients of the reaction are regarded.  As this is not, in general, the case in studies on the chemical space \cite{Llanos2019,Restrepo2022}, the size of a reaction may be regarded as a proto-molecularity of the reaction.  It only accounts for the number of different chemicals reported in the reaction, but not for their actual figures.  Often, chemists omit writing, for instance, water or carbon dioxide, as those substances can be inferred from the context of the reaction or because of the tradition to emphasise the target product of a reaction, namely of a chemical synthesis \cite{JostRestrepo2022}.}  Thus, $s(r_1)=2$, $s(r_2)=s(r_3)=3$ and $s(r_4)=4$ for the chemical space of Figure \ref{fig:toy_model}.  Reaction size is bounded by $2\leq s(r)\leq n$ as a reaction must involve at least an educt and a product and the largest reaction must involve no more than $n$, the number of substances of the chemical space.

Based on the size of reactions, the size of the chemical space, encoded in $G$ (Definition \ref{def:chemical_ultragraph}), can be introduced.
\begin{defin}\label{def:size_CS}
The \emph{size} $s(G)$ of an oriented hypergraph $G$, whose oriented hyperedges are gathered in $E$, is given by
\begin{align}
s(G)=\sum_{r\in E}s(r).
\end{align}
\end{defin}
Note how the chemical space of four substances in Figure \ref{fig:toy_model} has $s(G)=12$.  As we discuss below, $s(G)$ becomes a proxy for the connectivity of the chemical space, which is straightforward approached through the degree of a vertex (substance).

Following the definition of vertex degree in graph theory \cite{Diestel2018}, the degree of a vertex $v \in V$ ($d(v)$) of an oriented hypergraph $G$ corresponds to the number of oriented hyperedges (reactions) in which the substance participates.  For the oriented hypergraph of Figure \ref{fig:toy_model}, $d(\text{A})=d(\text{B})=d(\text{C})=d(\text{D})=3$.  Likewise, the degree of an oriented hypergraph $d(G)$ can be defined.
\begin{defin}\label{def:degree_CS}
The \emph{degree} $d(G)$ of an oriented hypergraph $G$, of vertices gathered in $V$, is given by
\begin{align}
d(G)=\sum_{v\in V}d(v).
\end{align}
\end{defin}
Thus, $d(G)=12$ for the oriented hypergraph of Figure \ref{fig:toy_model}. There is an interesting relation between size and degree of a chemical space modelled as an oriented hypergraph.

\begin{Lemma}\label{prop:size_equal_degree}
The size of an oriented hypergraph $G$ and its degree are equal. That is
\begin{align}
s(G) = d(G).
\end{align}
\begin{proof}
Given an oriented hypergraph $G=(V,E)$ made of $n$ vertices (substances) gathered in $V$ and of $u$ reactions (oriented hyperedges) gathered in $E$, $G$ is equivalent to a bipartite graph whose vertices correspond to both substances $(v)$ and reactions ($r$).  As the degree sum formula for a bipartite graph is \cite{lovasz1993}
\begin{align}
d(G) = \sum_{v \in V} d(v) = \sum_{r \in E} d(r) = u,
\end{align}
then $d(G) = u$ and as $s(r)=d(r)$, then
\begin{align}
\sum_{r \in E} d(r) = \sum_{r \in E} s(r) = s(G) =u.
\end{align}
\end{proof}
\end{Lemma}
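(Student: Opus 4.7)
The plan is to prove the equality by a straightforward double-counting argument, counting incidences between substances and reactions in two different ways.

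First, I would introduce the notion of an \emph{incidence}: a pair $(v,r)$ where $v\in V$ and $r=\{X,Y\}\in E$ with $v\in X\cup Y$. Let $I(G)$ denote the total number of such incidences. The entire proof then consists of computing $I(G)$ in two ways and identifying one sum with $d(G)$ and the other with $s(G)$.

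Next I would sum over vertices first. For a fixed $v\in V$, the number of reactions $r\in E$ containing $v$ in either its educt or product hypervertex is, by definition, $d(v)$ (Definition of degree from the text preceding Definition \ref{def:degree_CS}). Hence
\begin{align}
I(G)=\sum_{v\in V}d(v)=d(G).
\end{align}
Then I would sum over reactions first. For a fixed $r=\{X,Y\}\in E$, the number of vertices $v$ with $v\in X\cup Y$ equals $|X\cup Y|$. Here the key observation is the disjointness condition $X\cap Y=\varnothing$ imposed in Definition \ref{def:chemical_ultragraph}, which gives $|X\cup Y|=|X|+|Y|$. Since $r=\{b_i,b_j\}$ has $s(r)=i+j=|X|+|Y|$, this yields
\begin{align}
I(G)=\sum_{r\in E}s(r)=s(G).
\end{align}
Combining the two evaluations of $I(G)$ gives $s(G)=d(G)$.

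The only step that requires any care is making sure the disjointness assumption is explicitly invoked when passing from $|X\cup Y|$ to $|X|+|Y|$; without this chemical constraint, one would only obtain $s(G)\geq d(G)$. This is the only real substance of the argument; the rest is a textbook handshake-style double count, essentially the same content as the author's bipartite-graph formulation, but phrased directly in terms of the oriented hypergraph without introducing an auxiliary structure.
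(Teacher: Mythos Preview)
Your proof is correct and is essentially the same argument as the paper's: the paper routes the double count through the bipartite incidence graph and invokes its degree-sum formula, whereas you count the incidence pairs $(v,r)$ directly, but the two are identical in content. Your explicit invocation of the disjointness condition $X\cap Y=\varnothing$ to justify $|X\cup Y|=|X|+|Y|=s(r)$ is a small improvement in clarity over the paper, which uses $s(r)=d(r)$ without comment.
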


Thus, size and degree of the oriented hypergraph $G$ modelling the chemical space indicate how tight or dense the chemical space is.  Low size or degree values indicate a sparse chemical space, while high values a strongly connected space.  In order to provide a baseline for comparison of sizes and degrees of chemical spaces, upper and lower bounds of those oriented hypergraph sizes and degrees need to be determined.\footnote{Bounds for size and degree of oriented hypergraphs are provided in Lemma \ref{cor:size_deg_bounds}.}  This, in turn, requires determining the maximum and minimum number of reactions a given set $V$ of $n$ substances may hold.\footnote{See Lemma \ref{Remark3.5}.}  We call the \emph{complete oriented hypergraph} over $V$ the oriented hypergraph holding the maximum number of reactions over the $n$ substances gathered in $V$.

Given that the adjacency matrix ${\bf M}$ can be arranged according to the size of its hypervertices, it can be conveniently written as
\begin{align}
{\bf M} = \left( \begin{array}{cccc}
{\bf M}_{1,1} & {\bf M}_{1,2} & \cdots & {\bf M}_{1,n-1} \\
{\bf M}_{2,1} & {\bf M}_{2,2} & \cdots & {\bf M}_{2,n-1} \\
\vdots & \vdots & \ddots & \vdots\\
{\bf M}_{n-1,1} & {\bf M}_{n-1,2} & \cdots & {\bf M}_{n-1,n-1} 
\end{array}\right),
\end{align} 
where ${\bf M}_{i,j}$ indicates the block of ${\bf M}$ containing information on the relationship between hypervertices of size $i$ and of size $j$.
\begin{Lemma}\label{rmk:null_blocks}
The blocks ${\bf M}_{i,j}$ with $i+j>n$ are null blocks.
\begin{proof}
As reactions with size $i+j>n$ necessarily have a common substance, out of the $n$ substances, then those reactions gathered in the blocks ${\bf M}_{i,j}$ with $i+j>n$ are impossible.  Therefore, for blocks ${\bf M}_{i,j}$ with $i+j>n$, their ${\bf M}$-entries are 0-entries.
\end{proof}
\end{Lemma}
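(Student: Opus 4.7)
The plan is to invoke Definition~\ref{def:chemical_ultragraph} directly, where every oriented hyperedge $\{X,Y\}$ requires $X\cap Y=\varnothing$, and then use a pigeonhole/inclusion-exclusion argument to show that this disjointness is incompatible with $|X|+|Y|>n$ inside a vertex set of cardinality $n$.

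First I would fix arbitrary hypervertices $b_i\in B_i$ and $b_j\in B_j$ with $i+j>n$, so that $|b_i|=i$ and $|b_j|=j$, and note that $b_i,b_j\subseteq V$ with $|V|=n$. Then I would apply inclusion-exclusion: $|b_i\cap b_j|=|b_i|+|b_j|-|b_i\cup b_j|\geq i+j-n>0$, using the bound $|b_i\cup b_j|\leq|V|=n$. Consequently $b_i\cap b_j\neq\varnothing$, which violates the disjointness condition in Definition~\ref{def:chemical_ultragraph}, so the pair $\{b_i,b_j\}$ cannot belong to $E$.

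From the defining formula \eqref{adjmatrix} of the adjacency matrix, this forces $M_{i,j}=0$ for every such pair of hypervertices. Since the block ${\bf M}_{i,j}$ consists precisely of the entries indexed by hypervertices of sizes $i$ and $j$, every entry of ${\bf M}_{i,j}$ is $0$, i.e., ${\bf M}_{i,j}$ is a null block.

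I do not expect any real obstacle here: the argument is a one-line inclusion-exclusion combined with the disjointness clause of Definition~\ref{def:chemical_ultragraph}. The only care needed is to state explicitly that $b_i\cup b_j\subseteq V$ so that $|b_i\cup b_j|\leq n$, which is what makes the pigeonhole step rigorous, and to tie the conclusion back to the block structure of ${\bf M}$ rather than just to individual entries.
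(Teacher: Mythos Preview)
Your proposal is correct and follows essentially the same approach as the paper: the paper's proof is the one-line pigeonhole observation that two subsets of $V$ with total cardinality exceeding $n=|V|$ must intersect, which is exactly what your inclusion--exclusion inequality $|b_i\cap b_j|\geq i+j-n>0$ spells out explicitly. Your version is simply a more detailed rendering of the same argument, with the additional (helpful) step of tying the conclusion back to Definition~\ref{def:chemical_ultragraph} and to the entries of ${\bf M}$ via \eqref{adjmatrix}.
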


The effect of Lemma \ref{rmk:null_blocks} upon the possible number of reactions of a given chemical space is enormous, as it shows that the disjoint condition for hypervertices belonging to reactions reduces to a large extent the actual possibilities for exploring new chemical combinations.  Recall that these reactions correspond to red 0-entries in the matrix ${\bf M}$ shown in Table \ref{AMatrixN4}, where it is seen the effect upon a chemical space of only four substances.  Figure \ref{fig:matrices_impossible} shows the proportion of impossible reactions for larger spaces.  Equation \ref{eq:impossible_rxns} below quantifies how rapid impossible reactions grow as a function of $n$.

\begin{figure}[ht!]
\centering
\includegraphics[width=\linewidth]{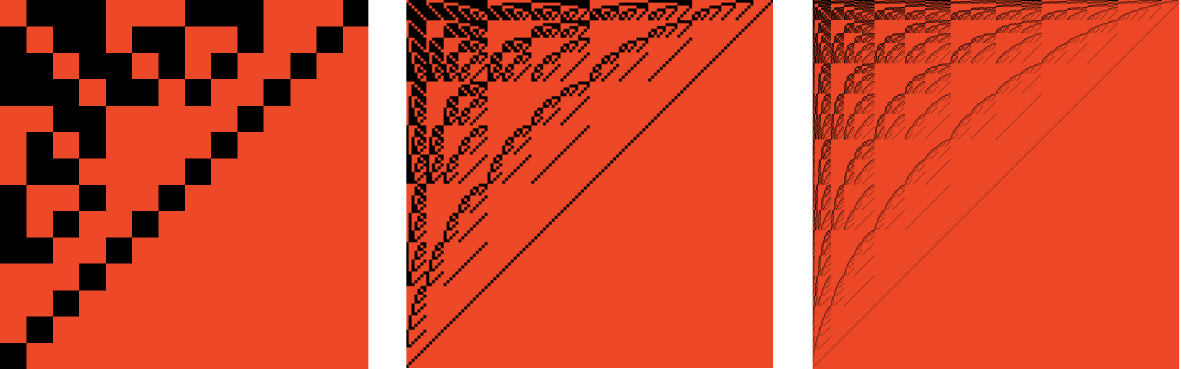}
\caption{Amount of possible and impossible reactions. Visual depiction of adjacency matrices ${\bf M}$ for chemical spaces of a) $n=4$, b) $n=7$ and c) $n=10$ substances (vertices). Possible reactions (black entries) correspond to oriented hyperedges, where the related hypervertices (sets of substances) are disjoint. Impossible reactions (red entries) are the oriented hyperedges relating non-disjoint sets of substances. }
\label{fig:matrices_impossible}
\end{figure}

In order to determine the number of reactions of the complete oriented hypergraph, we analyse the number of reactions in each block ${\bf M}_{i,j}$ of the adjacency matrix.

\begin{Lemma} \label{Remark3.1}
The \emph{number of oriented hyperedges for the block matrix ${\bf M}_{i,j}$} in a complete oriented hypergraph is given by
\begin{align}
u_{i,j}(n) = \left\{ \begin{array}{cc} {\cal C}^n_i {\cal C}^{n - i}_j & i \neq j\\ \frac{1}{2} \, {\cal C}^n_i {\cal C}^{n - i}_i & i = j \end{array}\right.
\end{align}
where ${\cal C}^n_i$ is given in Equation \ref{eq:B_i}.
\begin{proof}
If $ i \neq  j$, the number of oriented hyperedges $u_{i,j}(n)$ corresponds to the number of 1-entries in the block matrix ${\bf M}_{i,j}$, that is $u_{i,j}(n) = \sum_{i,j} {\bf M}_{i,j}$.  Moreover, as ${\bf M}$ is symmetric, $u_{i,j}(n) = u_{j,i}(n)$.  The same symmetry argument indicates that if $i = j$, the number of oriented hyperedges is half the number of 1-entries for the block matrix ${\bf M}_{i,j}$.  As noted, it is important to calculate the number of 1-entries for a block matrix ${\bf M}_{i,j}$.  Let $[ {\bf M}_{i,j} ]$ be the matrix components of the block matrix ${\bf M}_{i,j}$. According to Equation \ref{eq:B_i}, the maximum number of rows of this block is ${\cal C}^n_i$.  Given that vertices are indistinguishable, the number of 1-entries is the same for each row. Hence, the number of 1-entries for ${\bf M}_{i,j}$ is given by ${\cal C}^n_i$ multiplied by the number of 1-entries per row.  As every 1-entry represents a chemical reaction, therefore, the intersection between the hypervertex of the row $i$ and the hypervertex of the column $j$ must be empty. Thus, the number of vertices available for the hypervertex of the column $j$ is given by $n - i$.  Thus, the number of non-zero components on each column is given by ${\cal C}^{n - i}_j$, which is the number of combinations with $j$ vertices out of $n - i$ vertices.  By replacing this result in the previous relations, we obtain
\begin{align}
u_{i,j}(n) &= {\cal C}^n_i \, {\cal C}^{ n - i}_{j} , \nonumber
\end{align}
\noindent which is the total number of non-zero components for the block matrix ${\bf M}_{i,j}$ when $i \neq j$.  If $i = j$, we have to consider half the amount of 1-entries, that is to say
\begin{align}
u_{i,i}(n) &= \frac{1}{2} {\cal C}^n_i \, {\cal C}^{n - i}_{i}. \nonumber
\end{align}
\end{proof}
\end{Lemma}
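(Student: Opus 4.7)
My plan is to count ordered pairs of disjoint hypervertices of the prescribed sizes and then correct for the symmetry of the unordered oriented hyperedge. First I would note that the rows of the block ${\bf M}_{i,j}$ are indexed by the ${\cal C}^n_i$ subsets of $V$ of size $i$ (Equation \ref{eq:B_i}). Fixing any such row hypervertex $X$, the disjointness condition $X\cap Y=\varnothing$ from Definition \ref{def:chemical_ultragraph} restricts the column hypervertex $Y$ to lie in $V\setminus X$, a ground set of $n-i$ vertices, so there are exactly ${\cal C}^{n-i}_j$ admissible columns. Because the vertices of $V$ are indistinguishable, this per-row count is independent of the choice of $X$, and multiplying the two contributions yields ${\cal C}^n_i\,{\cal C}^{n-i}_j$ ordered pairs, which is the number of 1-entries of ${\bf M}_{i,j}$ in the complete oriented hypergraph.

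The remaining step is to translate 1-entries into oriented hyperedges, and this is where the two cases part ways. If $i\neq j$, then for a hyperedge $\{X,Y\}$ with $|X|=i$ and $|Y|=j$ the symmetric partner entry at position $(Y,X)$ lies in the distinct block ${\bf M}_{j,i}$, so the 1-entries inside ${\bf M}_{i,j}$ are in bijection with the oriented hyperedges, giving $u_{i,j}(n)={\cal C}^n_i\,{\cal C}^{n-i}_j$. If $i=j$, the symmetry $M_{X,Y}=M_{Y,X}$ places both entries of a single hyperedge inside the same block ${\bf M}_{i,i}$, so every hyperedge is counted twice and the count must be halved, yielding $u_{i,i}(n)=\tfrac{1}{2}{\cal C}^n_i\,{\cal C}^{n-i}_i$.

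The main subtlety to watch is precisely this symmetry bookkeeping in the diagonal case: everywhere else the enumeration is a direct consequence of the disjointness constraint built into Definition \ref{def:chemical_ultragraph} combined with the standard identity for choosing two disjoint subsets of prescribed sizes from a finite ground set. I would also note in passing that the product ${\cal C}^n_i\,{\cal C}^{n-i}_j$ can be recognised as the multinomial coefficient $\binom{n}{i,\,j,\,n-i-j}$, which automatically vanishes when $i+j>n$ and is therefore consistent with Lemma \ref{rmk:null_blocks}.
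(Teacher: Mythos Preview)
Your proposal is correct and follows essentially the same route as the paper's own proof: count the ${\cal C}^n_i$ row hypervertices, observe that disjointness leaves ${\cal C}^{n-i}_j$ admissible column hypervertices per row, multiply, and halve in the diagonal case because the symmetric entry stays in the same block. Your explanation of \emph{why} the halving is needed only when $i=j$ (the partner entry lands in ${\bf M}_{j,i}$ versus ${\bf M}_{i,i}$) is in fact crisper than the paper's, and the multinomial remark tying the count to Lemma~\ref{rmk:null_blocks} is a pleasant bonus not present in the original.
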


Knowing the number of reactions in every block of ${\bf M}$, we can determine the number of reactions (oriented hyperedges) of a given size in the complete oriented hypergraph.
\begin{Lemma} \label{Remark3.2}
The \emph{number of oriented hyperedges of size $s$ in a complete oriented hypergraph} is given by
\begin{align} \label{eq:Number_rxns_complete_hypergraph}
u_s(n) = (2^{s-1} - 1){\cal C}^n_s
\end{align}
\begin{proof}
Oriented hyperedges with size $s$ are located within block matrices ${\bf M}_{i,j}$ such that $i+j=s$.  Hence, the  block matrices satisfying this condition are $\{ {\bf M}_{1,s-1}, {\bf M}_{2,s-2}, \dots, {\bf M}_{s-1,1} \}$.  The number of oriented hyperedges of each of these block matrices is  given by the Lemma \ref{Remark3.1}.  Therefore, the number of oriented hyperedges with size $s$ is given by
\begin{align}
u_s(n) &= \frac{1}{2}\sum^{s-1}_{i=1} n \, u_{i,s-i} = \frac{1}{2}\sum^{s-1}_{i=1} {\cal C}^n_i {\cal C}^{n - i }_{s-i} = \frac{1}{2}{\cal C}^n_s \sum^{s-1}_{i=1}  {\cal C}^{s}_{i}
 = \left(2^{s-1} - 1 \right) {\cal C}^n_s \nonumber 
\end{align}
\end{proof}
\end{Lemma}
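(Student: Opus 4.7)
The plan is to sum the block-wise $1$-entry count from Lemma \ref{Remark3.1} over every block that harbours hyperedges of size $s$, correct for double counting by a single factor of $\tfrac{1}{2}$, and then collapse the binomial sum via a trinomial identity.

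First, I would identify the relevant blocks. Since an oriented hyperedge $r=\{b_i,b_j\}$ has size $i+j$, only the blocks ${\bf M}_{1,s-1},{\bf M}_{2,s-2},\ldots,{\bf M}_{s-1,1}$ contribute. The proof of Lemma \ref{Remark3.1} shows that, regardless of whether $i=s-i$, the block ${\bf M}_{i,s-i}$ contains ${\cal C}^n_i\,{\cal C}^{n-i}_{s-i}$ ones. Any unordered hyperedge $\{X,Y\}$ of size $s$ produces exactly two such $1$-entries: when $|X|\neq|Y|$, one in ${\bf M}_{|X|,|Y|}$ and one in its mirror ${\bf M}_{|Y|,|X|}$; and when $s$ is even with $|X|=|Y|=s/2$, both entries $(X,Y)$ and $(Y,X)$ land in the same diagonal block. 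The factor $\tfrac{1}{2}$ therefore handles both regimes uniformly, yielding
\begin{align}
u_s(n)=\frac{1}{2}\sum_{i=1}^{s-1}{\cal C}^n_i\,{\cal C}^{n-i}_{s-i}.
\end{align}

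Second, I would invoke the trinomial identity ${\cal C}^n_i\,{\cal C}^{n-i}_{s-i}={\cal C}^n_s\,{\cal C}^s_i$ (both sides equal $n!/[i!(s-i)!(n-s)!]$), pull ${\cal C}^n_s$ out of the sum, and use the binomial theorem to rewrite the remaining sum as $\sum_{i=1}^{s-1}{\cal C}^s_i=2^s-2$. Combining this with the $\tfrac{1}{2}$ prefactor gives $(2^{s-1}-1){\cal C}^n_s$, as required.

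The one subtle step is the uniform double-counting argument: one must resist the temptation to plug in $u_{i,s-i}$ from Lemma \ref{Remark3.1} (which is already halved on the diagonal) and instead work with the raw $1$-entry counts, so that the single global $\tfrac{1}{2}$ is correct for both the mirror-block and diagonal-block cases. Once this bookkeeping is settled, everything else reduces to a short binomial manipulation.
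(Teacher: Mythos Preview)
Your proof is correct and follows essentially the same route as the paper: identify the blocks with $i+j=s$, sum the raw $1$-entry counts ${\cal C}^n_i\,{\cal C}^{n-i}_{s-i}$, halve once, and collapse via the trinomial identity ${\cal C}^n_i\,{\cal C}^{n-i}_{s-i}={\cal C}^n_s\,{\cal C}^s_i$ together with $\sum_{i=1}^{s-1}{\cal C}^s_i=2^s-2$. Your explicit justification that the single global factor $\tfrac{1}{2}$ uniformly handles both the mirror-block and diagonal-block cases is in fact cleaner than the paper's presentation, which writes $\tfrac{1}{2}\sum u_{i,s-i}$ and then silently replaces $u_{i,s-i}$ by the unhalved count ${\cal C}^n_i\,{\cal C}^{n-i}_{s-i}$ even when $s$ is even and $i=s/2$.
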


Thus, for the chemical space of the adjacency matrix shown in Table \ref{AMatrixN4}, $u_2(4)=6$, $u_3(4)=12$ and $u_4(4)=7$.

Now, we can determine the number of reactions in which a substance can participate in a complete oriented hypergraph.

\begin{Lemma}  \label{Remark3.3}
Given a complete oriented hypergraph, the \emph{number of oriented hyperedges in which a vertex participates in the block matrix ${\bf M}_{i,j}$} is
\begin{align}
u_{i,j}(n) = \left\{ \begin{array}{cc} \frac{(i +j)}{n} {\cal C}^{n}_{i+j}\, {\cal C}^{i + j}_{j} & i \neq j \\ \frac{i}{n} {\cal C}^{n}_{2i}\, {\cal C}^{2i}_{i} & i = j \end{array} \right. \label{eqnRemark3.3}
\end{align}
\end{Lemma}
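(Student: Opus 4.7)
The plan is to use a double-counting (handshake) argument combined with the vertex-transitivity of the complete oriented hypergraph, then convert the resulting expression into the stated binomial form.

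First I would observe that the complete oriented hypergraph is vertex-transitive: the symmetric group $S_n$ acts on $V$ by permutations and this action extends to hypervertices and to hyperedges, preserving each block ${\bf M}_{i,j}$ (since it preserves the sizes of the two hypervertices). Consequently, every vertex participates in exactly the same number of oriented hyperedges of block ${\bf M}_{i,j}$. Call this common number $u_{i,j}(n)$ (the quantity to be determined) and let $U_{i,j}(n)$ denote the total number of oriented hyperedges in the block, as computed in Lemma \ref{Remark3.1}.

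Next, I would count pairs $(v,r)$ with $v\in V$ a vertex and $r$ an oriented hyperedge of block ${\bf M}_{i,j}$ containing $v$, in two ways. Counting by vertices gives $n\,u_{i,j}(n)$. Counting by hyperedges: every oriented hyperedge in block ${\bf M}_{i,j}$ has size $i+j$, hence contains exactly $i+j$ distinct vertices (this uses the disjointness of the two hypervertices). This gives $(i+j)\,U_{i,j}(n)$. Equating the two counts yields
\begin{align}
u_{i,j}(n) = \frac{i+j}{n}\,U_{i,j}(n).
\end{align}

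It then remains to substitute the value of $U_{i,j}(n)$ from Lemma \ref{Remark3.1} and rewrite the product in the target form. For $i\neq j$, I would check the binomial identity
\begin{align}
{\cal C}^n_i\,{\cal C}^{n-i}_j \;=\; {\cal C}^n_{i+j}\,{\cal C}^{i+j}_j,
\end{align}
which follows at once from expanding both sides as $n!/(i!\,j!\,(n-i-j)!)$ (first choose the $i{+}j$ vertices of the reaction, then split them into the two hypervertices). For $i=j$, the factor $\tfrac12$ in $U_{i,i}(n)$ from Lemma \ref{Remark3.1} combines with $i+j=2i$ to give $\tfrac{i}{n}{\cal C}^n_i{\cal C}^{n-i}_i$, and the analogous identity ${\cal C}^n_i{\cal C}^{n-i}_i={\cal C}^n_{2i}{\cal C}^{2i}_i$ yields the stated form.

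No step looks genuinely hard; the only care required is the $i=j$ case, where the factor $\tfrac12$ reflects the fact that an unordered pair of equal-sized hypervertices is counted twice when one sums row-wise over ${\bf M}_{i,i}$, but is counted only once per incidence in the double-counting argument. Keeping this bookkeeping straight is what distinguishes the two cases in the final formula.
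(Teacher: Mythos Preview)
Your argument is correct, but it differs in structure from the paper's proof. The paper fixes a vertex $v_1$ and enumerates directly: it splits the hyperedges of the block ${\bf M}_{i,j}$ into those where $v_1$ lies in the size-$i$ hypervertex (counted as ${\cal C}^{n-1}_{i-1}\,{\cal C}^{n-i}_{j}$) and those where $v_1$ lies in the size-$j$ hypervertex (counted as ${\cal C}^{n-j}_{i}\,{\cal C}^{n-1}_{j-1}$), then adds and simplifies to $\tfrac{i+j}{n}{\cal C}^n_{i+j}{\cal C}^{i+j}_j$; the $i=j$ case is again handled by halving. Your approach instead appeals to vertex-transitivity and a handshake identity to write $u_{i,j}(n)=\tfrac{i+j}{n}\,U_{i,j}(n)$, and then invokes the already-established Lemma~\ref{Remark3.1} for $U_{i,j}(n)$. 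The payoff of your route is that it recycles the previous lemma and makes the factor $(i+j)/n$ transparent as a size-over-vertex-count ratio; the paper's route is self-contained (it does not rely on Lemma~\ref{Remark3.1}) and shows explicitly how the two contributions from the two hypervertex roles combine. Both are short and valid; your bookkeeping in the $i=j$ case is exactly right.
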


\begin{proof}
Let us consider an arbitrary block matrix ${\bf M}_{i,j}$ and an arbitrary vertex $v_1$. This block matrix can be split into two blocks, one in which the vertex $v_1$ appears in the hypervertex describing the rows of the block and the other block is when the substance appears in the column hypervertex (the intersection is obviously excluded). The number of (row) hypervertices in which the substance $v_1$ appears is given by ${\cal C}^{n-1}_{i-1} $, which is the number of combinations with $i - 1$ vertices out of $n - 1$ vertices. On the other hand, for the same block, the number of (columns) hypervertices in which $v_1$ is not present is given by $ {\cal C}^{n-i}_{j}$, which is the number of combinations with $j$ vertices out of $n - i$ vertices.  Therefore, the total number of oriented hyperedges in which $v_1$ appears in the hypervertex $b_i$ is given by ${\cal C}^{n-1}_{i-1} {\cal C}^{n-i}_{j}$.

Let us now consider the second block within the same block matrix ${\bf M}_{i,j}$. In this case the number of (column) hypervertices in which $v_1$ is contained is given by $ {\cal C}^{n-1}_{j-1} $, which, similarly to the previous case, is the number of combinations with $n - 1$ substances out of $j -1 $ vertices. On the other hand, and still in the second block, the number of (row) hypervertices in which $v_1$ is not present is given by $ {\cal C}^{n-j}_{i} $, which corresponds to the number of combinations of $n-j$ substances out of $i$ substances. Therefore, the total number of oriented hyperedges in which $v_1$ appears in the hyperedge $b_j$ is given by ${\cal C}^{n-j}_{i} {\cal C}^{n-1}_{j-1}$.

Combining these results, we have that the number of oriented hyperedges in which $v_1$ can appear in the block matrix $M_{i,j}$, when $i \neq j$, is given by
\begin{align}
u_{i,j}(n) = {\cal C}^{n-1}_{i-1} {\cal C}^{n-i}_{j}  + {\cal C}^{n-j}_{i} {\cal C}^{n-1}_{j-1} = \frac{(i+j)}{n} \, {\cal C}^{n}_{i+j} \, {\cal C}^{i+j}_{j} \nonumber
\end{align}
\noindent and when $i = j$ we have half the number of oriented hyperedges, that is 
\begin{align}
u_{i,i}(n) = \frac{i}{n} \, {\cal C}^{n}_{2i} \, {\cal C}^{2i}_{i} \nonumber
\end{align}
\end{proof}

The above remarks allow for determining the number of reactions in the complete oriented hypergraph in which a substance can participate (Lemma \ref{Remark3.4}), as well as the number of reactions of a complete oriented hypergraph  (Lemma \ref{Remark3.5}).

\begin{Lemma} \label{Remark3.4}
The \emph{number of oriented hyperedges in which an arbitrary  vertex can belong in a complete oriented hypergraph} is given by
\begin{align}
u(n) =  3^{n-1} - 2^{n-1} . \label{eqnRemark3.4}
\end{align}
\end{Lemma}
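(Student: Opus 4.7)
The plan is to count directly by a three-state classification of the $n-1$ vertices other than $v_1$. Fix $v_1 \in V$; because the two sides of an oriented hyperedge are disjoint, $v_1$ lies in at most one of $X, Y$, and a hyperedge $\{X,Y\}$ contains $v_1$ iff $v_1$ belongs to exactly one side. I may therefore orient the unordered pair canonically by declaring $X$ to be the side through $v_1$. Specifying such a $\{X,Y\}$ is then equivalent to assigning to each of the $n-1$ vertices $v \neq v_1$ one of three states: ``$v \in X$'', ``$v \in Y$'', or ``$v \notin X \cup Y$''. This produces $3^{n-1}$ assignments in total.

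Next I would filter out the inadmissible assignments imposed by Definition \ref{def:chemical_ultragraph}, namely $X, Y \neq \varnothing$ and $X, Y \neq V$. The side $X$ is automatically nonempty, containing $v_1$; once $Y$ is nonempty, $X \neq V$ follows from $X \cap Y = \varnothing$ and $Y \neq V$ follows from $v_1 \notin Y$. The only forbidden assignments are therefore those with $Y = \varnothing$, i.e.\ those in which every vertex other than $v_1$ is placed either ``in $X$'' or ``in neither''. There are exactly $2^{n-1}$ such assignments. Subtracting yields $u(n) = 3^{n-1} - 2^{n-1}$, as claimed.

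A more calculational route consistent with the preceding lemmas would be to sum Lemma \ref{Remark3.3} over all blocks $\mathbf{M}_{i,j}$ that can accommodate $v_1$, grouped by size $s = i+j$: the size-$s$ contribution simplifies to $\binom{n-1}{s-1}(2^{s-1}-1)$ (using $\tfrac{s}{n}\binom{n}{s} = \binom{n-1}{s-1}$), and $\sum_{s=2}^{n}\binom{n-1}{s-1}(2^{s-1}-1)$ collapses by the binomial theorem to $(1+2)^{n-1} - 2^{n-1}$, in agreement with the bijective count. The only delicate point in either approach is the bookkeeping of unordered pairs together with the diagonal blocks $\mathbf{M}_{i,i}$; once $v_1$ is used to label one side canonically as $X$, this bookkeeping disappears and the rest is immediate.
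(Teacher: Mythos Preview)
Your proof is correct, and your primary bijective argument is genuinely different from the paper's. The paper proceeds purely computationally: it sums the per-block counts of Lemma~\ref{Remark3.3} over all admissible pairs $(i,j)$, writing
\[
u(n)=\frac{1}{2}\sum_{i=1}^{n-1}\sum_{j=1}^{n-i}\frac{i+j}{n}\,{\cal C}^{n}_{i}\,{\cal C}^{\,n-i}_{j},
\]
and then reduces this double sum algebraically to $3^{n-1}-2^{n-1}$. Your secondary, size-grouped calculation is this same computation reorganised by $s=i+j$ and is quite close in spirit to what the paper does. Your three-state count, by contrast, bypasses the block decomposition entirely: once $v_1$ labels one side canonically, the diagonal $i=j$ bookkeeping and the factor $\tfrac{1}{2}$ disappear, and the terms $3^{n-1}$ and $2^{n-1}$ acquire an immediate combinatorial meaning (all placements of the remaining $n-1$ vertices, minus those with the opposite side empty). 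The paper's route has the complementary virtue of staying inside the adjacency-matrix framework built up in the surrounding lemmas, so it simultaneously checks the consistency of those block formulae; your route is shorter and self-contained.
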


\begin{proof}
By considering the result of Lemma \ref{Remark3.1}, we obtain
\begin{align}
u(n) &= \frac{1}{2} \sum^{n-1}_{i=1} \, \sum^{n-i}_{j=1} u_{i,j} =  \frac{1}{2} \sum^{n-1}_{i=1} \, \sum^{n-i}_{j=1} \frac{(i+j)}{n} {\cal C}^n_{i} {\cal C}^{n-i}_j \nonumber \\
&=  \frac{1}{2 n} \left[ \sum^{n-1}_{i=1} \, i \, {\cal C}^n_{i} \sum^{n-i}_{j=1}  {\cal C}^{n-i}_j + \sum^{n-1}_{i=1} \, {\cal C}^n_{i} \sum^{n-i}_{j=1} \, j \, {\cal C}^{n-i}_j \right] \nonumber \\
&=  \frac{1}{2 n} \left\{ \sum^{n-1}_{i=1} \, i \, {\cal C}^n_{i} \left[ 2^{n-i} - 1 \right] + \sum^{n-1}_{i=1} \, {\cal C}^n_{i} \left[ (n - i) 2^{n -i -1} \right] \right\} \nonumber \\
&=  \frac{1}{2 n} \sum^{n-1}_{i=1} {\cal C}^n_i \left\{ \frac{i}{2}  \, 2^{n-i} - i  + \frac{n}{2}  \, 2^{n -i}  \right\} = 3^{n-1} - 2^{n-1}  \nonumber 
\end{align}
\end{proof}

From the chemical perspective, this implies that a substance can participate at most in $u(n)$ chemical reactions, that is to say, the maximum degree of a substance is $u(n)$. A question opened by Lemma \ref{prop:size_equal_degree} was about the bounds for the size and degree of an oriented hypergraph. Lemma \ref{Remark3.4} provides the information to determine them.

\begin{Lemma}\label{cor:size_deg_bounds}
    The size $s(G)$ and degree $d(G)$ of an oriented hypergraph is bounded by $0\leq x(G) \leq n(3^{n-1}-2^{n-1})$, where $x(G)$ stands for either $s(G)$ or $d(G)$.
\end{Lemma}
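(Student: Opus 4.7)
The plan is to reduce everything to bounding the degree $d(G)$ using Lemma \ref{prop:size_equal_degree}, and then apply Lemma \ref{Remark3.4} vertex-by-vertex.

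First, by Lemma \ref{prop:size_equal_degree}, $s(G) = d(G)$, so it suffices to establish the stated bounds for $d(G)$; the same bounds will then automatically hold for $s(G)$.

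For the lower bound, observe that the edge-less oriented hypergraph $G = (V, \varnothing)$ has $d(v) = 0$ for every $v \in V$, so by Definition \ref{def:degree_CS} we get $d(G) = 0$. Since $d(v) \geq 0$ for every vertex in any oriented hypergraph, we have $d(G) \geq 0$ in general.

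For the upper bound, I would argue as follows. By Definition \ref{def:degree_CS}, $d(G) = \sum_{v \in V} d(v)$, where the sum has exactly $n$ terms. For any oriented hypergraph $G$ on the vertex set $V$, each $d(v)$ counts the number of oriented hyperedges of $G$ containing $v$, which cannot exceed the number of oriented hyperedges containing $v$ in the complete oriented hypergraph on $V$. By Lemma \ref{Remark3.4}, this latter quantity equals $u(n) = 3^{n-1} - 2^{n-1}$. Hence $d(v) \leq 3^{n-1} - 2^{n-1}$ for every $v \in V$, and summing over the $n$ vertices yields
\begin{align}
d(G) = \sum_{v \in V} d(v) \leq n \left( 3^{n-1} - 2^{n-1} \right).
\end{align}
Combining both bounds and invoking Lemma \ref{prop:size_equal_degree} gives the desired inequality $0 \leq x(G) \leq n(3^{n-1} - 2^{n-1})$ for $x(G) \in \{s(G), d(G)\}$.

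There is essentially no obstacle here: the only point that deserves a sentence of justification is the claim that the per-vertex maximum from Lemma \ref{Remark3.4} is achieved simultaneously for all vertices in the complete oriented hypergraph, which is immediate from the definition of completeness (every allowed oriented hyperedge is realised), so the upper bound is in fact tight.
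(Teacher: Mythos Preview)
Your proof is correct and follows essentially the same route as the paper: reduce to $d(G)$ via Lemma~\ref{prop:size_equal_degree}, observe the lower bound from the hyperedge-less case, and obtain the upper bound by summing the per-vertex maximum $3^{n-1}-2^{n-1}$ from Lemma~\ref{Remark3.4} over the $n$ vertices. Your write-up is in fact slightly more explicit than the paper's about why the per-vertex bound from the complete hypergraph applies to an arbitrary $G$ and about tightness.
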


\begin{proof}
Minimum size and minimum degree of an oriented hypergraph $G$ are reached for the case of a hyperedge-less oriented hypergraph.  Therefore, $\min s(G)$ and $\min d(G)=0$.  The maximum value of these parameters is reached for the complete hypergraph.  As $\max d(G)$ corresponds to the sum of the degree of each vertex in the complete oriented hypergraph, this amounts to add the number of oriented hyperedges in which each vertex in the complete oriented hypergraph belongs.  As $3^{n-1}-2^{n-1}$ (Lemma \ref{Remark3.4}) is the number of oriented hyperedges in which a vertex can belong in the complete oriented hypergraph, summing over all vertices yields $\max d(G)=n(3^{n-1}-2^{n-1})$.  As $d(G)=s(G)$ (Lemma \ref{prop:size_equal_degree}), then, $\max s(G)=n(3^{n-1}-2^{n-1})$
\end{proof}

Thus, for the toy chemical space $G$ depicted in Figure \ref{fig:toy_model}, $s(G)=d(G) \in [0,76]$.  As we found that these figures are equal to 12 for that chemical space, it is therefore observed how far the toy chemical space is from being a complete oriented hypergraph, with $s(G)=d(G)=76$, and how close it is to be a hyperedge-less oriented hypergraph, with $s(G)=d(G)=0$.

Lemma \ref{Remark3.4} also allows for determining the number of reactions housed by a complete oriented hypergraph.
\begin{Lemma} \label{Remark3.5}
The \emph{number of oriented hyperedges for a complete oriented hypergraph} is \cite{Restrepo2022}
\begin{align}
u_r(n) = \frac{1}{2} ( 3^n - 2^{n+1} + 1 ). \label{eqnRemark3.5}
\end{align}
\end{Lemma}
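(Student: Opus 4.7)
The plan is to express the total count as a sum over hyperedge sizes, invoking Lemma \ref{Remark3.2}, and then evaluate the resulting binomial expression. Since every oriented hyperedge of the complete oriented hypergraph has a well-defined size $s \in \{2, 3, \dots, n\}$, and these sizes partition the set of oriented hyperedges, I would begin from
$$u_r(n) = \sum_{s=2}^{n} u_s(n) = \sum_{s=2}^{n} (2^{s-1}-1)\,{\cal C}^n_s.$$

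Next, I would split this into two binomial sums and evaluate each via the binomial theorem, taking care to subtract the missing $s=0$ and $s=1$ terms:
\begin{align*}
\sum_{s=2}^{n} 2^{s-1}\,{\cal C}^n_s &= \tfrac{1}{2}\sum_{s=0}^{n} 2^{s}\,{\cal C}^n_s - \tfrac{1}{2}(1 + 2n) = \tfrac{1}{2}\bigl(3^n - 1 - 2n\bigr),\\
\sum_{s=2}^{n} {\cal C}^n_s &= 2^n - 1 - n,
\end{align*}
which follow from $(1+2)^n = 3^n$ and $(1+1)^n = 2^n$. Subtracting the second identity from the first yields
$$u_r(n) = \frac{3^n - 1 - 2n}{2} - (2^n - 1 - n) = \frac{1}{2}(3^n - 2^{n+1} + 1),$$
which is the desired formula.

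No step is genuinely difficult; the only care needed is in the bookkeeping of the boundary terms $s=0$ and $s=1$ when invoking the binomial theorem. As a consistency check, for the toy space of Figure \ref{fig:toy_model} ($n=4$), Lemma \ref{Remark3.2} gives $u_2(4)+u_3(4)+u_4(4) = 6+12+7 = 25$, and indeed $\tfrac{1}{2}(81 - 32 + 1) = 25$. A short alternative derivation, independent of the preceding lemmas, proceeds by observing that each of the $n$ vertices is independently placed either in the left hypervertex, in the right hypervertex, or in neither, producing $3^n$ ordered labellings; subtracting the $2 \cdot 2^n - 1$ configurations in which at least one side is empty (inclusion–exclusion) and halving, to pass from ordered pairs to unordered oriented hyperedges, gives $\tfrac{1}{2}(3^n - 2^{n+1} + 1)$ directly.
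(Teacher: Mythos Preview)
Your proof is correct. The route differs from the paper's: the paper sums over the block matrices ${\bf M}_{i,j}$ via Lemma~\ref{Remark3.1}, evaluating the double sum $\tfrac{1}{2}\sum_{i}\sum_{j} {\cal C}^n_i\,{\cal C}^{n-i}_j$ by first collapsing the inner sum to $2^{n-i}-1$ and then reducing the outer sum. You instead sum over hyperedge sizes $s$ via Lemma~\ref{Remark3.2}, which sits one step higher in the chain of lemmas; this trades a double sum for a single one and makes the binomial-theorem bookkeeping slightly more transparent. Both routes ultimately invoke the same identities, so neither is materially harder. Your second, self-contained argument---assigning each vertex independently to the left hypervertex, the right hypervertex, or neither, then removing by inclusion--exclusion the $2\cdot 2^n - 1$ ordered configurations with an empty side and halving---is genuinely distinct from anything in the paper and has the virtue of not depending on Lemmas~\ref{Remark3.1} or~\ref{Remark3.2} at all; it would stand well as an independent proof or as a consistency check on the block-matrix machinery.
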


\begin{proof}
By the Lemma \ref{Remark3.1}, it follows that
\begin{align}
u_r(n) &= \frac{1}{2} \sum^{n-1}_{i=1} \sum^{n-i}_{j=1} u_{i,j}(n) = \frac{1}{2} \sum^{n-1}_{i=1} \sum^{n-i}_{j=1} {\cal C}^n_i \, {\cal C}^{n-i}_j = \frac{1}{2} \sum^{n-1}_{i=1} {\cal C}^n_i  \sum^{n-i}_{j=1} {\cal C}^{n-i}_j \nonumber \\
& = \frac{1}{2} \sum^{n-1}_{i=1} {\cal C}^n_i \left[ 2^{n-i} - 1 \right] = \frac{1}{2} \sum^{n-1}_{i=1} {\cal C}^n_i 2^{n-i} - \frac{1}{2} \sum^{n-1}_{i=1} {\cal C}^n_i \nonumber \\
& = \frac{1}{2} \left[ 3^n - 2^n - 1 \right] - \frac{1}{2} \left[ 2^n - 2 \right] = \frac{1}{2} ( 3^n - 2^{n+1} + 1 ) \nonumber
\end{align}
\end{proof}

This indicates that for the toy-example of Figure \ref{fig:toy_model}, the reactions indicated in the chemical space of four substances are only four reactions out of the 25 possible ones, which correspond to the upper or lower triangles of 1-entries above the main diagonal in the adjacency matrix shown in Table \ref{AMatrixN4} plus the 0-entries in black font.  They correspond to the black entries of the upper or lower triangles above the main diagonal in the left matrix of Figure \ref{fig:matrices_impossible}.

Just to have an idea of the speed of growth of $u_r$ regarding $n$, for $n=2$ to 5, $u_r$ takes values 1, 6, 25, and 90.  This growth is given by
\begin{align}
    \frac{du_r}{dn}&=\frac{1}{2}(3^n\ln 3-2^{n+1}\ln 2) 
\end{align}

This quantifies the speed of growth of the possible chemical space as a function of the number of substances of the space.  It constitutes the upper bound of wiring of any chemical space, which sets the stage to contrast this upper bound with the historical record of the chemical space.  This subject is explored in a forthcoming paper.

The number of possible reactions in the complete oriented hypergraph allows for determining the number of impossible reactions because of the disjoint condition of educts and products, which is given by
\begin{align}\label{eq:impossible_rxns}
    z(n)&=(2^n-2)^2-\frac{1}{2}(3^n-2^{n+1}+1) \nonumber \\
    &=\frac{1}{2}(2\cdot 4^{n} -3^n - 6\cdot 2^{n} +7),
\end{align}
which for $n$ ranging from 2 to 5 yields $z=3$, 30, 171 and 810.  In fact,
\begin{align}\label{eq:impossible_rxns}
    \frac{dz}{dn}&=\frac{1}{2}(4^{n} \cdot \ln 16 - 3^n \cdot \ln 3 - 2^{n} \cdot \ln 64 ) ,
\end{align}

which corresponds to the speed of growth of red 0-entries in any adjacency matrix ${\bf M}$.

When comparing the number of possible reactions (Equation \ref{eq:impossible_rxns}) with the number of impossible reactions (Equation \ref{eq:impossible_rxns}), we observe that the former grows much slower than the latter.  This pattern is observed in Figure \ref{fig:matrices_impossible} for different values of $n$.\footnote{A further question is how many of the possible reactions are actually realised by chemists in the chemical space.  This is a subject we address in a forthcoming paper.}

Equipped with the results of this section, we proceed to develop the Erd\H{o}s-Rényi model for oriented hypergraphs.


\section{Erd\H{o}s-Rényi model for oriented hypergraphs}\label{sec:ER_model}

Although the literature on Erd\H{o}s-Rényi-like models for hypergraphs goes back at least 20 years \cite{Spencer2001,Barthelemy2022,Krapivsky2023,DEPANAFIEU2015,KARONSKI2002,Cooper2004,Newman2009,Chodrow2020,PARCZYK2015,Kaminski2019,Dewar2018}, most of those models are devoted to uniform hypergraphs, while a few of them to non-uniform ones \cite{DEPANAFIEU2015,KARONSKI2002,Cooper2004,Newman2009}.  By uniform hypergraphs we mean hypergraphs whose hyperedges have the same cardinality.  Some of those studies explore the statistical and mathematical properties of substructures embedded in random hypergraphs \cite{PARCZYK2015,Kaminski2019,Dewar2018}.  In general, none of those results addresses the particular case of oriented hypergraphs, which is the model we develop in this section.

Given a set of vertices $V$, the random oriented hypergraph corresponds to the realisation, or not, of every possible oriented hyperedge on $V$.  The probability of realisation of these hyperedges is given by $p$.  Like in the Erd\H{o}s-Rényi model for graphs, the random Erd\H{o}s-Rényi oriented hypergraph can be thought as the result of an algorithm that takes every possible couple of disjoint hypervertices on $V$ and decides whether to link them or not.  The decision depends on generating a random number, and if that number happens to be less than a predetermined value, denoted as $p$, then the hypervertices are connected; otherwise, they remain disconnected.

\begin{defin}
An \emph{Erd\H{o}s-Rényi random oriented hypergraph} $G(n,p)$, is an oriented hypergraph with $n$ vertices whose oriented hyperedges result from linking hypervertices with probability $p$.
\end{defin}

This random process leads to particular kinds of probability mass functions for the quantities described in previous sections such as degree and size of oriented hyperedges.  This results from the mathematical consistency of the Erd\H{o}s-Rényi model here presented. The expressions for the probability mass functions are provided in the remaining part of this section. To begin with, the number of reactions $R$ is a binomially distributed random variable, $R \sim \mbox{B}(u_r,p)$, with probability mass function given by

\begin{Prop}
The probability of having a $G(n,p)$ with $R$ oriented hyperedges is
\begin{align}
\mbox{Pr}(R = r) = \left( \begin{array}{c} u_r \\ r \end{array}\right) \, p^r \, \left( 1 - p \right)^{u_r - r},
\end{align}
\noindent which results from realising $r$ reactions and, therefore, of having $u_r - r$ non-realised reactions. The expected value of the number of reactions in $G(n,p)$ is given by
\begin{align} 
\mbox{E}[R] = \sum^{u_r}_{r = 0} r \, \mbox{Pr}(R = r) = p \, u_r, \label{ProbReactions}
\end{align}
\noindent where $u_r$ is given in Equation (\ref{eqnRemark3.5}).
\end{Prop}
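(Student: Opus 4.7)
The plan is to reduce the proposition to the standard properties of the Binomial distribution, exploiting the independence built into the Erd\H{o}s-R\'enyi construction. By Lemma \ref{Remark3.5}, the complete oriented hypergraph on $V$ contains exactly $u_r = \frac{1}{2}(3^n - 2^{n+1} + 1)$ admissible oriented hyperedges, i.e.\ unordered pairs of disjoint nonempty hypervertices whose union is a proper subset of $V$. By the definition of $G(n,p)$, each such candidate hyperedge is realised independently with probability $p$. I would therefore enumerate these candidates as $k = 1,\dots,u_r$ and introduce indicators $X_k \sim \mathrm{Bernoulli}(p)$, where $X_k = 1$ iff the $k$-th candidate is realised. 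Then $R = \sum_{k=1}^{u_r} X_k$ is a sum of independent identically distributed Bernoulli variables.

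For the PMF, the event $\{R = r\}$ decomposes as a disjoint union, over all $\binom{u_r}{r}$ size-$r$ subsets $S \subseteq \{1,\dots,u_r\}$, of the events $\{X_k = 1 \text{ for } k \in S \text{ and } X_k = 0 \text{ for } k \notin S\}$. By independence each of these events has probability $p^r(1-p)^{u_r-r}$; summing over $S$ yields the claimed binomial expression. This is precisely the standard derivation of the Binomial$(u_r, p)$ law applied here with $u_r$ given by Lemma \ref{Remark3.5}.

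For the expected value, the cleanest route is linearity of expectation, $\mathrm{E}[R] = \sum_{k=1}^{u_r} \mathrm{E}[X_k] = p\,u_r$. If one prefers to evaluate the sum in the statement directly (which seems to be the paper's convention), I would use the identity $r\binom{u_r}{r} = u_r\binom{u_r-1}{r-1}$ to factor out $pu_r$ and then invoke the binomial theorem $\sum_{r'=0}^{u_r-1}\binom{u_r-1}{r'}p^{r'}(1-p)^{u_r-1-r'} = 1$ to collapse the remaining sum, arriving again at $\mathrm{E}[R] = p\,u_r$ and substituting the closed form from Equation (\ref{eqnRemark3.5}) when desired.

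There is no genuine obstacle: once one recognises $R$ as a sum of $u_r$ independent Bernoulli trials, both claims are standard. The nontrivial input is the count $u_r$ of admissible oriented hyperedges, which has already been established, so this proposition mainly serves to record the immediate distributional consequence of the random construction.
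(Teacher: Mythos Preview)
Your argument is correct and coincides with the paper's own justification, which is not presented as a separate proof but is embedded in the statement itself (the clause ``which results from realising $r$ reactions and, therefore, of having $u_r - r$ non-realised reactions''); you have simply made the standard binomial derivation explicit via Bernoulli indicators and linearity of expectation. One small inaccuracy worth fixing: the admissible oriented hyperedges are unordered pairs of disjoint nonempty \emph{proper} subsets of $V$, but their union is \emph{not} required to be a proper subset of $V$ (reactions of size $s=n$ are permitted, cf.\ the bound $2\le s(r)\le n$ and the reaction $r_4$ in the paper's toy example), so your parenthetical description should be corrected---though this does not affect the proof, since you correctly take the count $u_r$ from Lemma~\ref{Remark3.5}.
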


This implies that the expected number of reactions in a random oriented hypergraph is proportional to the maximum number of possible reactions.  The actual number is weighted by the probability $p$.

As we discuss in Definitions \ref{def:size_CS} and \ref{def:degree_CS} and in Lemma \ref{prop:size_equal_degree}, size and degree of an oriented hypergraph become important proxies to determine how tight or sparse is a chemical space.  The random model naturally links the tightness of a chemical space with the probability of wiring the associated oriented hypergraph.  Thus, random processes based on high values of $p$ lead to high size and degree oriented hypergraphs, while processes underlying low $p$ figures, necessarily lead to sparse oriented hypergraphs with small sizes and degrees.

The role of $p$ is also central for the probability of observing a given number of reactions of a particular size (Remark \ref{lemma:8.2}).

\begin{remark}\label{lemma:8.2}
The number of oriented hyperedges with size $s$ is a binomially distributed random variable, $R_s \sim \mbox{B}(u_s,p)$, such that its probability mass function is given by
\begin{align}
\mbox{Pr}(R_s = r_s) = \left( \begin{array}{c} u_s \\ r_s \end{array}\right) \, p^{r_s} \, \left( 1 - p \right)^{u_s - r_s} ,
\end{align}
\noindent which results from considering that there are $r_s$ realised reactions of size $s$ and $u_s - r_s$ non-realised reactions with the same size $s$. As a result, the expected value of the number of oriented hyperedges of size $s$ is
\begin{align}\label{eq:exp_size}
\mbox{E}[R_s] = \sum^{u_s}_{r_s=0} r_s \mbox{Pr}(R_s = r_s) = p u_s.
\end{align}
\noindent This leads to determining the probability of having a reaction with size $s$. This probability $P(s)$ is given by the ratio of the expected number of reactions with size $s$ ($\mbox{E}[R_s]$) and the sum of the total number of expected reactions for the different sizes
\begin{align} \label{SizeDistriModel}
P(s) = \frac{ \mbox{E}[R_s] }{\sum^{u_s}_{R_s=2} \mbox{E}[R_s]} = \frac{u_s}{u_r}.
\end{align}
\noindent Hence, $P(s)$ corresponds to the ratio between $u_s$, the number of reactions with size $s$, and the number of possible reactions $u_r$, where $u_s$ and $u_r$ are given in Lemmas \ref{Remark3.2} and \ref{Remark3.5}, respectively. Remarkably, this probability $P(s)$ associated with the size $s$ is $p$-independent in the random model here defined.\footnote{This indicates that, in the case of a phase transition for this model, the probability $P(s)$ cannot be altered by the criticality.}
\end{remark}
Finally, another implication of the random model is that the vertex degree is also a random variable as stated in the following Remark.
\begin{remark} \label{lemma:8.3}
The vertex degree is a binomially distributed discrete random variable, $D \sim \mbox{B}(u_n,p)$, with probability mass function of the form
\begin{align} \label{ProbSDegree}
\mbox{Pr}(D = d) = \left( \begin{array}{c} u_n \\ d \end{array}\right) \, p^d \, \left( 1 - p \right)^{u_n - d},
\end{align}
\noindent which again, results from having $d$ realised reactions for an arbitrary substance out of the $u_n$ reactions in which the substance can participate, and $u_n - d$ non-realised reactions. Therefore, the expected value of the vertex degree is
\begin{align} \label{EVD}
\mbox{E}[D] = \sum^{u_n}_{d=0} d \, \mbox{Pr}(D = d) = p \, u_n,
\end{align}

\noindent where $u_n$ is given in Equation \ref{eqnRemark3.4}.
\end{remark}

Equations \ref{ProbReactions}, \ref{eq:exp_size}, \ref{SizeDistriModel} and \ref{EVD} show that our model is conceptually correct. As a consequence, in any test for randomness of real data modelled as an oriented hypergraph, each of the  probability distributions discussed here should be statistically close to those given in the aforementioned equations, if the data are randomly generated. On the other hand, as the expected number of reactions of the random oriented hypergraph is related to the expected number of reactions a given substance has, we obtain the following expression for the ratio of those two variables
\begin{align}
\frac{ \mbox{E}[R]}{\mbox{E}[D]} = \frac{1}{2} \frac{\left[ 1 - 2 \left( \frac{2}{3} \right)^n + 3^{-n} \right]}{\left[ \frac{1}{3} - \frac{1}{2} \left( \frac{2}{3} \right)^n\right]},
\end{align}
which, for a large number of substances leads to
\begin{align}
\lim_{n \rightarrow \infty} \frac{\mbox{E}[R]}{\mbox{E}[D]} = \frac{3}{2}. \label{RandD}
\end{align}

That is, if a chemical space is randomly wired, the number of reactions it houses is $3/2$ the number of reactions in which any substance of the space participates.  Therefore, the ratio $ \mbox{E}[R] / \mbox{E}[D] $ can be used to test whether a given chemical space is close to randomness or not. This result clearly contrast with its Erd\H{o}s-Rényi analog for simple graphs which takes the form
\begin{align}
\frac{\mbox{E}[R]}{\mbox{E}[D]} = \frac{n}{2},
\end{align}
and which grows linearly with the number of substances. In the case of the chemical space oriented hypergraph, the factor $3/2$ is actually an upper bound to the ratio $\mbox{E}[R]/\mbox{E}[D]$.

Aiming at having more insight on the effect of $p$ upon the relation between the number of substances ($n$) and the expected number of reactions ($\mbox{E}[R]$), we explore different forms $p$ might take. They range from the case of a constant number of reactions, independent of the number of substances ($\mbox{E}[R] \sim k$); or from a simple linear relation $\mbox{E}[R] \sim k n$; to more complex relations in which the expected number of reactions varies according to a power of the number of substances ($\mbox{E}[R] \sim n^\alpha$); or even that the number of reactions grows exponentially with the number of substances ($\mbox{E}[R] \sim b^n$). To do so, we analyse some cases of chemical spaces for which different values of $n$ and $p$ are considered.

From Equation \ref{ProbReactions} we know that for large values of $n$, $\ln \mbox{E}[R]$ takes the form\footnote{It is known that for the actual chemical space $n \sim 10^6$ \cite{Llanos2019}.}
\begin{align}
\ln \mbox{E}[R] \sim \alpha \ln{n} + n \ln\left( \frac{3}{\beta}\right) ,\label{Limit1}
\end{align}
where the above discussed chemical spaces are generalised by considering a probability given as 
\begin{align}
p=n^\alpha/\beta^n. \label{Probab}
\end{align}
With $\beta=3$, Equation \ref{Limit1} becomes
\begin{align}
\ln \mbox{E}[R] \sim \alpha \ln{n} ,\label{Limit2}
\end{align}
which is a linear relation in a log-log scale with $\alpha$ encoding the slope of the linear trend.  When $\alpha = 0$, $\mbox{E}[R] \sim 1/2$, in which case $p=1/3^n$ and no matter how large the number of substances in the chemical space is, the number of reactions reported by chemists is always a fixed number. $\mbox{E}[R] \sim n$ is obtained with $\alpha=1$, where $p=n/3^n$. This linear relation between $\mbox{E}[R]$ and $n$ indicates that chemists manage wiring the space in a manner that is proportional to the available substances. $\mbox{E}[R] \sim n^\alpha$ is obtained with $p= n^\alpha/3^n$.  If $\alpha > 1 $, the greater the value of $\alpha$, the more reactions are discovered. The scenario with $\alpha < 0 $ yields a sparse chemical space with a decreasing power law relation in which the more substances, the less reactions are discovered. These behaviours are shown in Figure \ref{fig:exp_rxns}a for $\alpha  = -1, 0, 1$ and $2$. 

In turn, $\mbox{E}[R] \sim b^n$, actually $\mbox{E}[R] \sim 3^n$ is reached with $\beta=1$, in which case $p=n^\alpha$ and the leading order of Equation \ref{Limit1} gives
\begin{align}
\ln \mbox{E}[R] \sim n \ln\left( 3 \right) .\label{Limit3}
\end{align}
Hence, the log-plot of the $ \mbox{E}[R]$ as a function of $n$ depicts a constant slope for different values of $\alpha$, as can be seen in Figure \ref{fig:exp_rxns}b. This latter result follows from the fact that for large values of $n$ and fixed and small values of $\alpha$, the first term is negligible (note that $\alpha \leq 0$ to secure that $0\leq p \leq 1$).

These results, besides their importance for the analysis of chemical spaces, pave the way for the exploration of phase transitions in Erd\H{o}s-Rényi-like oriented hypergraphs.  In this respect, although chemical spaces with $\mbox{E}[R] \ll 1$ lack chemical meaning,\footnote{Which occur for low values of $n$ in Figure \ref{fig:exp_rxns}.} they turn interesting to analyse the aforementioned phase transitions. 

\begin{figure}[ht!]
  \centering
  \includegraphics[width=0.7\linewidth]{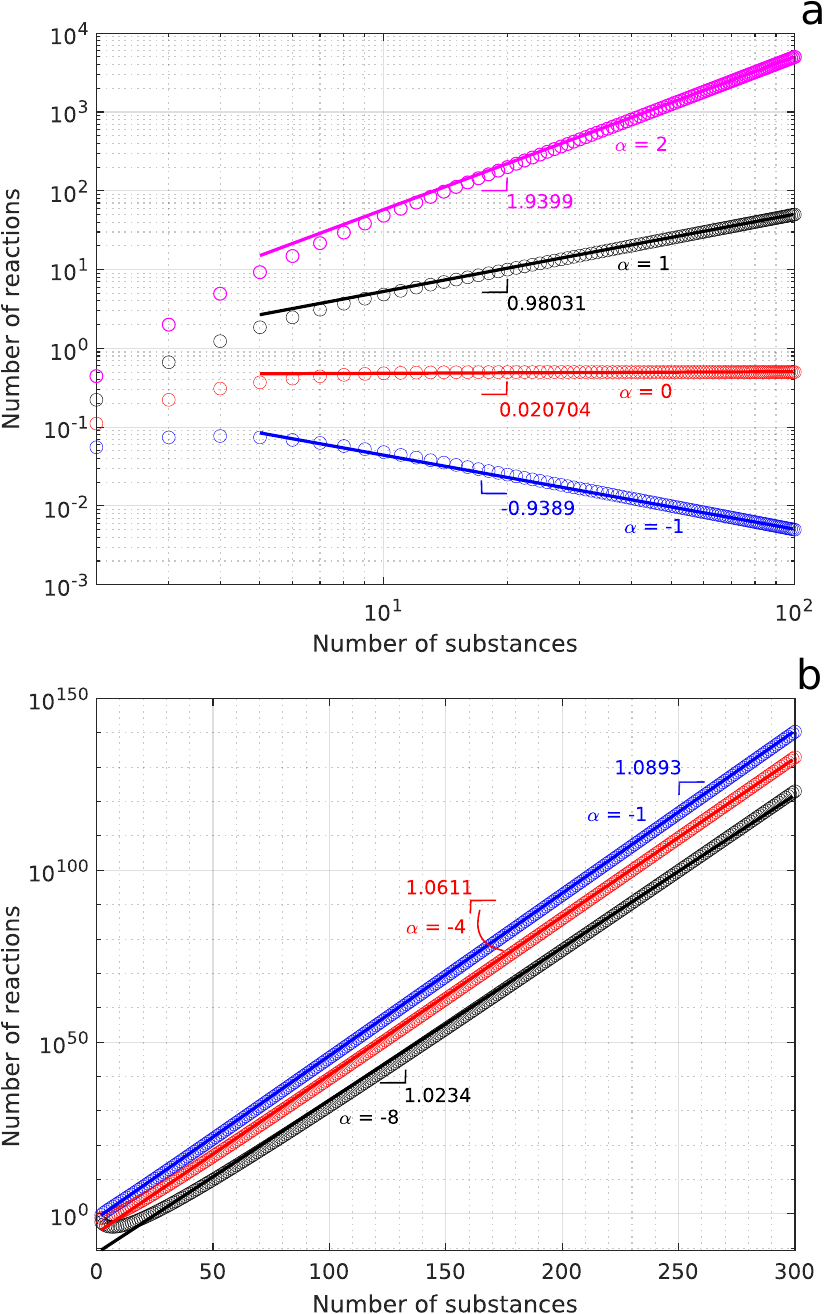}
  \caption{Effects of the probability of triggering chemical reactions upon the expected number of reactions of randomly wired chemical spaces.  Probability is expressed as $p=n^\alpha/\beta^n$ and the plots show how the expected number of reactions $\mbox{E}[R]$ varies with the selection of $\alpha$ and $\beta$. In a) $\beta = 3$ and $\alpha$ takes different values, which show the decreasing power law ($\alpha = -1$), the constant ($\alpha=0$), linear ($\alpha=1$) and quadratic ($\alpha=2$) growth of $\mbox{E}[R]$ for large values of the number of substances $n$. In all these chemical spaces, where $\beta=3$, $\alpha \leq (n\ln{3})/(\ln{n})$ to warranty that $0 \leq p \leq 1$. In b) $\beta=1$ and $\alpha \leq 0$ to secure that $0 \leq p \leq 1$. These plots correspond to exponential-like growths of $\mbox{E}[R]$ for large values of $n$, where the slope of the linear fit tend to $\ln 3 \approx 1.099$. Plots in a) and b) were obtained for different values of $n$ in $\mbox{E}[R] = \frac{n^\alpha}{2\beta^n}(3^n-2^{n+1}+1)$.}
        \label{fig:exp_rxns}
\end{figure}

The probability of triggering a reaction not only affects the number of reactions but also the size of those reactions in the chemical space.  Therefore, we explore how the different values of $p$, given in Equation \ref{Probab}, affect the number of reactions of different sizes. From Equation \ref{eq:exp_size} we know that the expected number of reactions of size $s$ ($\mbox{E}[R_s]$) is given by $pu_s$.  That is, $\mbox{E}[R_s]$ results from the probability of realising or not reactions of size $s$ in the complete oriented hypergraph.  By operating on the expressions for $u_s$ (Lemma \ref{Remark3.2}), we found, for large values of $n$ and with $\beta = 1$, that
\begin{align}
\ln \mbox{E}[R_s] \sim (s + \alpha)\ln{n}, \label{SizeLimit1}
\end{align}
where we used the binomial coefficient approximation for large values of $n$ and fixed (small) values of $s$ together with the Stirling approximation \cite{Asymptopia}. This expression is similar to Equation \ref{Limit2} in what it shows three regimes attending to the value of slope $s + \alpha$. 

Recalling that $\alpha < 0$ to guarantee that $0 \leq p \leq 1$, this result indicates that in a random chemical space where the number of reactions grows exponentially with the number of substances (Equation \ref{Limit3}), the number of reactions with size either $s$ drops, remains constant or increases, depending on whether the slope $s + \alpha$ is negative, null or positive, respectively. The general expression for large values of $n$ is given by $\mbox{E}[R_s] = n^{s -|\alpha|}$. For example, if $\alpha = -2$, rearrangement reactions ($s=2$) remain constant with the number of available substances $n$, while reactions with size $s > 2$ follow a power law $\mbox{E}[R_s] \sim n^{s - 2}$. Given that the smallest value of $s$ is $2$, for $\alpha = -2$ there is no possibility of a negative slope. However, for a different value, for example $\alpha = -4$, reactions with size $s < 4$ drop following a power law and give rise to a sparse population of reactions with those sizes. In Figure \ref{fig:sizes}a we show an example for $\alpha = -2$, $\beta = 1$ and $s = 2, 3, 4$ and $5$.

\begin{figure}[ht!]
\centering
\includegraphics[width=0.7\textwidth]{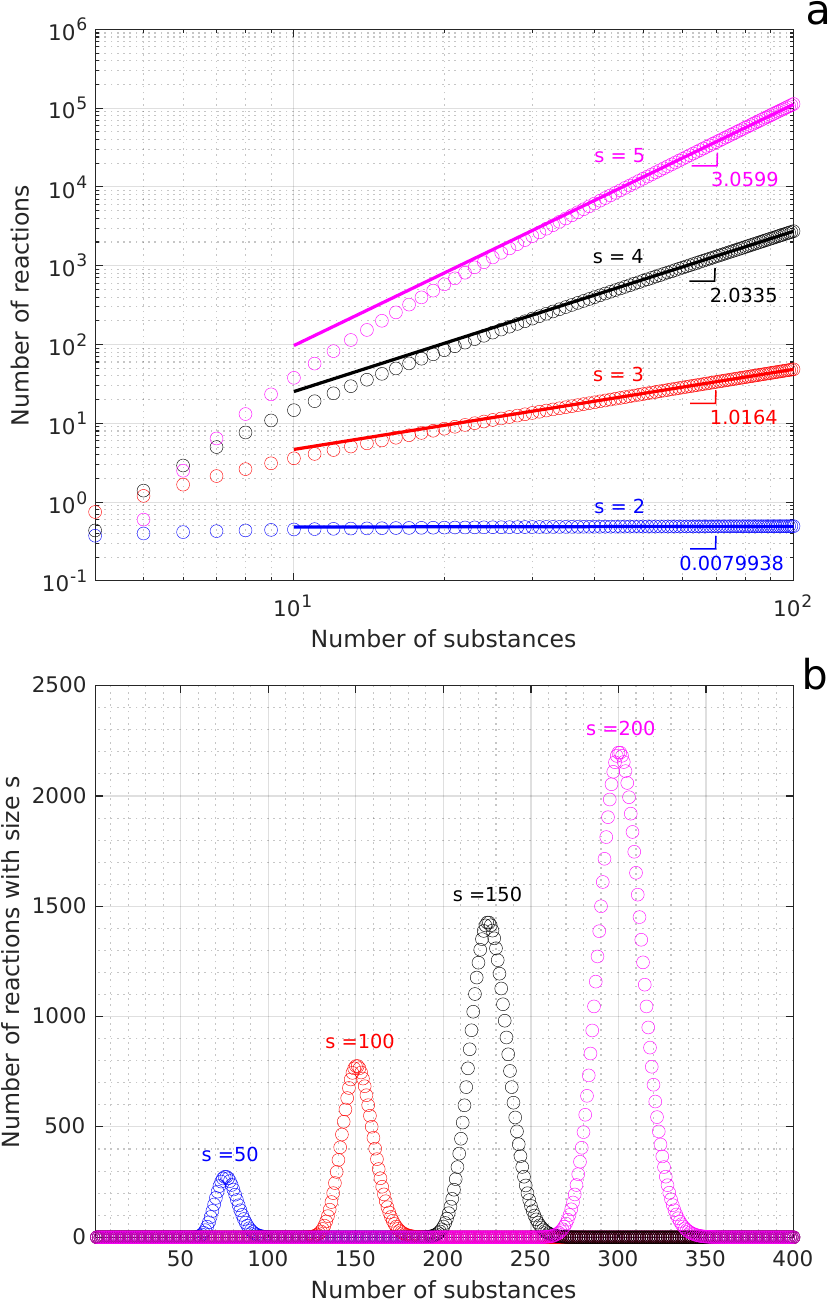}
\caption{Effects of the probability of triggering reactions upon the expected number of reactions of different sizes in a randomly wired chemical space.  Probability is given by $p=n^\alpha /\beta^n$.  a) Behaviour at $\alpha = -2$ and $\beta = 1$, which corresponds to a chemical space whose number of reactions exponentially expand with the number of substances only in case $s > |\alpha|$.  b) Distribution of number of reactions of size $s = 50, 100, 150$ and $200$ for chemical spaces with $\alpha = 2$ and $\beta=3$, corresponding to spaces whose number of reactions grows at a power law of the number of substances. Maximum values are given at $n_{max} = 76, 151, 226$ and $301$ respectively.}
\label{fig:sizes}
\end{figure}

As $\beta$ may take also the value of 3 for chemical spaces with either linear or power-law growth of the number of reactions, for these cases the leading order of $\ln \mbox{E}[R_s]$ takes the form
\begin{align}
\ln \mbox{E}[R_s] \sim -n\ln{3}, \label{SizeLimit2}
\end{align}
which shows that the asymptotic behaviour of $\mbox{E}[R_s]$ for large values of $n$ is a decreasing exponential $\mbox{E}[R_s] \sim 1/3^{n}$ in terms of the number of substances $n$ and for any size $s$, $s$ being much smaller than $n$.\footnote{It is known that $s \ll n$ for actual chemical spaces \cite{Llanos2019}.} However, the number of reactions with a fixed size $s$, $\mbox{E}[R_s]$, reaches a maximum value at a number of substances $n_{max}$ given by the solution to the equation
\begin{align}
\frac{\alpha}{n_{max}} + \ln\left[ \frac{n_{max}}{3(n_{max} -s)}\right] = 0
\end{align}
where again, we used the Stirling approximation for large values of $n$ \cite{Asymptopia}. This implies that in a random linear or power-law wiring of the chemical space, the number of reactions of size $s$ reaches its maximum population with spaces of $n=n_{max}$ substances (Figure \ref{fig:sizes}b). This result shows another facet of randomly wired chemical spaces.  In particular, that large randomly wired spaces are mainly populated by reactions of large sizes, where reactions of small size only represent a small population of the bulk of reactions.  This suggests that actual chemical spaces, mainly populated by reactions of small size \cite{Llanos2019}, are indeed far from a random wiring.

So far, we have compared the values of $\mbox{E}[R_s]$ for different realisations of a random chemical space. To explore how the number of reactions with different sizes $s$ relates each other within the same chemical space, we now fix $n$ and note that, for $\beta = 1$ the probability of wiring reactions, $p = 1/n^{\alpha}$ is much larger than $p = n^{\alpha}/3^n$, the wiring probability with $\beta  = 3$ with $\alpha$ fixed in both cases. Consequently, the number of reactions $\mbox{E}[R]$ is much larger in the first case than in the second, see Figure \ref{fig:exp_rxns}. This implies that the number of reactions with a given size $s$ follows the same trend, that is to say, the number of reactions with size $s$, $\mbox{E}[R_s]$, is larger when the probability is given by $p = 1/n^{\alpha}$ compared with what would be the number of reactions of the same size for a lower probability $p = n^{\alpha}/3^n$, see Figure \ref{fig:sizes2}. Additionally, as long as $p$ is considered to be $s-$independent, for a given number of substances $n$, $\mbox{E}[R_s]$ reaches a maximum at a $p$-independent size value $s_{max}$. The value of the most populated size $s_{max}$ is the solution to the equation (see Figure \ref{fig:sizes2}):
\begin{align}
\frac{2^{s_{max} - 1} \ln 2}{2^{s_{max} -1} -1} + \ln\left( \frac{n - s_{max}}{s_{max}}\right) = 0,
\end{align}
where we used the Stirling approximation for large values of $n$ \cite{Asymptopia}.

\begin{figure}[ht!]
\centering
\includegraphics[width=0.8\textwidth]{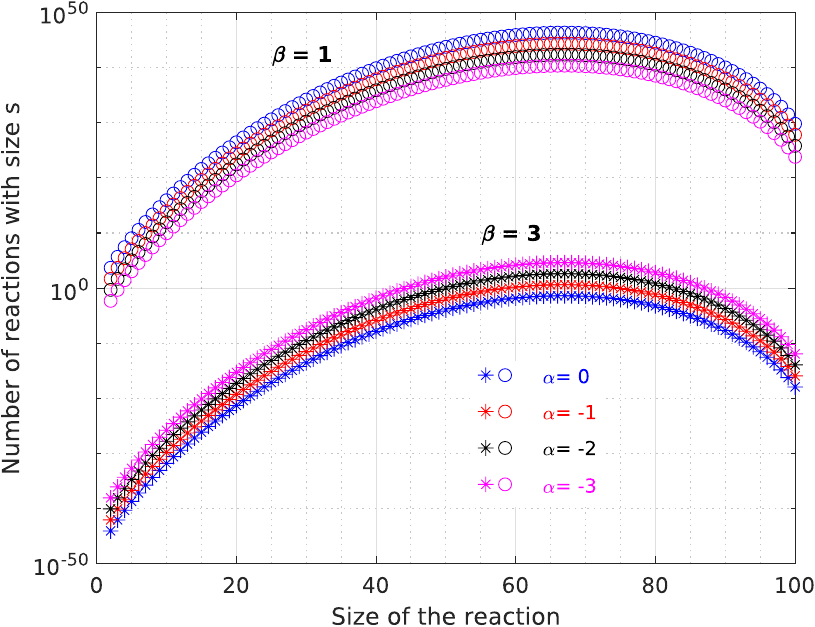}
\caption{Effects of the probability of triggering reactions upon the expected number of reactions of different sizes in randomly wired chemical spaces with a fixed number of substances, $n = 100$ and wiring probability given by $p=n^\alpha /\beta^n$. Behaviour for different values of $\alpha $ and $\beta$. For $\beta = 3$, $\alpha = 0, 1, 2$ and $3$, while for $\beta = 1$, $\alpha = -3, -2, -1$ and $0$. The maximum values of $\mbox{E}[R_s]$ are given at $s_{max} = 67$.}
\label{fig:sizes2}
\end{figure}



\section{Conclusion and outlook}
We developed the Erd\H{o}s-Rényi model for oriented hypergraphs with a fixed number $n$ of vertices.  Oriented hypergraphs result from the binary relations between sets of arbitrary size of vertices (hypervertices).  In particular, we considered oriented hypergraphs where the related hypervertices are disjoint.  This follows from our aim of modelling all possible chemical reactions, that is catalysed and non-catalysed reactions, with the former including autocatalytic reactions.

Central for the Erd\H{o}s-Rényi model is the complete oriented hypergraph, as the model randomly realises oriented hyperedges of the complete oriented hypergraph.  This realisation is mediated by a probability $p$.  We analysed different functional forms of $p$, which depend on $n$, and that allow for constant, linear, power law and exponential behaviours of the number of oriented hyperedges as a function of $n$.  These forms of $p$, as well as the trends and their effects upon the number of oriented hyperedges constitute an approach to determine whether a given oriented hypergraph follows the patterns of a random wired oriented hypergraph.

The application motivating this study is the chemical space, which we model as an oriented hypergraph, where vertices correspond to substances and oriented hyperedges to chemical reactions.  Two main reasons turn oriented hypergraphs as a suitable model for chemical spaces, or in general, sets of chemical reactions.  First, hypervertices, which are the relata of the oriented hypergraph model, gather together the substances involved in a chemical reaction.  Therefore, hypervertices turn instrumental to distinguish substances involved in a chemical transformation from those which do not.  Second, oriented hyperedges, that is binary collections of hypervertices, encode a fundamental aspect of chemical reactions, namely the distinction between educts and products.

The Erd\H{o}s-Rényi model here formulated is central for chemical studies since the availability of chemical information in electronic form, spanning centuries of chemical activity, as well as the ever growing computational power, have made possible to study the evolution of the entire chemical space \cite{Llanos2019,Restrepo2022}.  This has led to  results on the growth rate of the space, as well as to analyses of its diversity  at the compositional and molecular structural levels, as well as at the reaction type level \cite{Llanos2019,Restrepo2022,Fialkowski2005,Grzybowski2021,Lipkus2019}.  Despite the importance of these studies, an open question is whether the chemical space has a particular structure and how far, or close, such a structure lies from its random counterpart \cite{JostRestrepo2022}.  This boils down to the question whether chemical space has been randomly wired in any moment of its expansion, which poses further questions on the nature of chemistry as a discipline \cite{JostRestrepo2022}.  At any rate, whether chemical space is far or close to randomness requires quantifying the distance between the actual chemical space at a particular time and its random case.  The Erd\H{o}s-Rényi model, therefore, turns central for such a quantification, which is the subject of a  forthcoming publication.  In this respect, by analysing the number of reactions involving a certain number of substances in a randomly wired chemical space, we found evidence that the actual chemical space seems to depart from a random wiring, as random spaces with similar number of substances of those in the actual chemical space are mainly populated by reactions involving by far more than a handful of substances.  This latter is the typical situation of actual chemical reactions.  The same approach of exploring the distribution of reactions involving certain number of substances over time turns central to analyse whether there are subspaces of the chemical space of rather small number of substances that are close to a random wiring.  A similar argument may be used to analyse the random wiring of spaces going beyond the traditional two or three educts of chemical reactions \cite{Llanos2019}, which is one of the objectives of one-pot reactions \cite{Hayashi2021}, for instance, which aim at affording a circular and green chemistry.

As recently discussed \cite{Mueller:22a}, oriented hypergraphs become suitable models to encode the molecular reality of reactions if every oriented hyperedge is conservative.  That is, if its associated stoichiometric matrix, which encodes the actual amount of substances involved in the reaction, has a strictly positive reaction invariant.  This mathematical condition implies that a chemical reaction must preserve the number of atoms and, therefore, mass.  The condition imposed by the stoichiometric matrix triggers further questions.  For instance, one might inquire into the frequency with which the random model fulfills the given criterion as a function of the number of oriented hyperedges. Our model can be extended to incorporate such stoichiometric analyses but it implies several changes in the adjacency matrix, which plays a fundamental role in obtaining most of the expressions used here. Such a modification requires further investigation and it is beyond the scope of the current paper.

The Erd\H{o}s-Rényi model is general enough to be applied to non-chemical systems modelled by oriented hypergraphs, which include description of logical and artificial intelligence processes, as well as database modelling \cite{Ausiello2017}.  Particular examples include the case of functional dependencies in relational databases \cite{Ausiello1983} or the analysis of Horn formulae in logic \cite{WILD2017264} and the study of AND-OR graphs \cite{Gallo1998}.

While developing the Erd\H{o}s-Rényi model, we introduced concepts of further interest for the study of oriented hypergraphs such as size and degree of oriented hyperedges, which we extended to the size and degree of the entire oriented hypergraph structure made by the collection of oriented hyperedges.  In chemical terms, we defined the size and degree of any reaction and we extended these concepts to the size and degree of arbitrary chemical spaces.  The size of an oriented hyperedge corresponds to the number of vertices belonging in the hyperedge, while the degree of the oriented hyperedge accounts for the number of oriented hyperedges incident to the vertices of the oriented hyperedge of reference.  In chemical terms, the size of a reaction accounts for the number of substances in the reaction and the degree of the reaction for the number of reactions in which substances of the reaction in question participate.  We showed that the size and degree of an oriented hypergraph are equal.  They indicate whether an oriented hypergraph is sparse or loosely connected, which occurs for structures with low size and degree values, that is close to 0.  In contrast, oriented hypergraphs with high values, close to the upper bound of size and degree ($n(3^{n-1}-2^{n-1})$) turn to be tightly connected structures.

By analysing the complete oriented hypergraph for $n$ vertices, we found that the maximum number of oriented hyperedges incident to any vertex is $3^{n-1}-2^{n-1}$, which in chemical terms amounts to determining the maximum number of reactions for a substance in a chemical space.  This expression evidences the extremely large possibilities for wiring the chemicals space \cite{Restrepo2022}.  This result led to find, as discussed before, that the size and degree of any oriented hypergraph are restricted to the interval [0, $n(3^{n-1}-2^{n-1})$].  The extreme wiring possibilities of a chemical space of $n$ substances are embodied in the maximum number of reactions a chemical space may hold, which turns out to be $\frac{1}{2}(3^n-2^{n+1}+1)$.  This is the number of oriented hyperedges of the complete oriented hypergraph.  The aforementioned result strongly contrast with the $n(n-1)/2$ edges of a graph.  The fact that the maximum number of reactions a chemical space may hold is proportional to $3^n$, when modelled as an oriented hypergraph, contrasts sharply with the just $n^2$ edges of the models of the chemical space based on graphs.  These huge differences between the number of oriented hyperedges and edges are the result of the much richer description of chemical reactions provided by the oriented hypergraph model, where the two sets of substances involved in chemical reactions (educts and products) are an explicit part of the model, whereas in the graph model, they are just disregarded.

The number of reactions in the complete oriented hypergraph allowed for determining the speed of growth of its oriented hyperedges as a function of the number of vertices ($du_r/dn$). We found that $du_r/dn=\frac{1}{2}(3^n\ln{3}-2^{n+1}\ln{2})$.  This result bears important implications for chemistry, as it provides the upper bound for the growth of the number of chemical reactions as a function of the number of available substances in the chemical space.  This allows for determining that the expected number of reactions of a random oriented hypergraph is given by $\frac{p}{2}(3^n-2^{n+1}+1)$ and for deriving similar expressions for the expected number of reactions of size $s$.  These results, in turn, allow to contrast actual chemical spaces with random chemical spaces at different levels of detail, for instance by analysing the actual and expected number of reactions of particular sizes.  An important invariant of random oriented hypergraphs we found is the ratio of the expected number of reactions and the expected degree or size of the chemical space. For large values of $n$, we found this ratio is 3/2, which becomes a proxy to determine whether a given chemical space is random of not, according to the Erd\H{o}s-Rényi model here presented.

Despite the richness of the oriented hypergraph for modelling chemical spaces, it is open to improvements.  For instance by introducing direction between the sets of substances involved in reactions, which would clearly distinguish between educts and products. In this case, chemical spaces are modelled as directed hypergraphs.  The Erd\H{o}s-Rényi model here presented requires further refinements to be adjusted for directed hypergraphs.  They involve, for instance, to adjust $p$ to distinguish between $ X \rightarrow Y$ and $Y \rightarrow X$, with $X$ and $Y$ being sets of substances (hypervertices).  We discussed the functional form of $p$ as depending of $n$, but other forms of $p$ are also worth exploring, for instance as a function of reaction size ($s$).  This leads to explore how the wiring of random chemical spaces depends on the amount of substances involved in their chemical transformations, which is a determining factor of the chemical possibilities to trigger actual chemical reactions \cite{child2014molecular}.

Besides the Erd\H{o}s-Rényi model for the high-order structures discussed in this paper, namely hypergraphs, as well as directed and oriented hypergraphs, other models need to be defined upon them, for instance the small-world \cite{Watts1998}, as well as the Barab\'asi-Albert \cite{Albert2002} ones. As well as in the original Erd\H{o}s-Rényi model, were phase transitions and the conditions to attain them were studied, a further avenue of research is the study of those conditions to afford phase transitions in higher-order structures as the ones here presented.

\section{Author contributing}
AG-C conducted the research, AG-C and GR conceptualised the project, GR wrote the paper and AG-C, MBM, PFS, JJ and GR discussed and reviewed the edited document.

\section{Conflicts of Interest Statement}
We have no competing interests.

\section{Funding}
MBM thanks the  support of the Alexander von Humboldt Foundation.

\section{ Acknowledgments}
The authors thank the feedback from Guido Montufar and Humberto Laguna upon early results of this project.

\bibliographystyle{prsa4}
\bibliography{References}

\end{document}